\documentclass[12pt,a4paper]{article}
\usepackage{amsmath,amssymb,amsthm,mathtools}
\usepackage{geometry}
\usepackage{hyperref}
\usepackage[nameinlink]{cleveref}
\usepackage{microtype}
\usepackage{enumitem}
\usepackage{bm}
\usepackage[numbers,sort&compress]{natbib}
\usepackage{doi}
\usepackage{xcolor}
\usepackage{booktabs}

\hypersetup{colorlinks=true,linkcolor=blue!60!black,citecolor=blue!60!black,
            urlcolor=blue!60!black}

\theoremstyle{plain}
\newtheorem{theorem}{Theorem}[section]
\newtheorem{proposition}[theorem]{Proposition}

\newtheorem{corollary}[theorem]{Corollary}
\theoremstyle{definition}
\newtheorem{definition}[theorem]{Definition}

\theoremstyle{remark}
\newtheorem{remark}[theorem]{Remark}

\newcommand{\R}{\mathbb{R}}
\newcommand{\Z}{\mathbb{Z}}
\newcommand{\dd}{\,\mathrm{d}}
\newcommand{\kb}{k_{\mathrm{B}}}
\newcommand{\kT}{k_{\mathrm{B}}T}
\newcommand{\Ms}{\mathcal{M}}            
\newcommand{\GN}{\Gamma_N}               
\newcommand{\CG}{\mathcal{C}}            
\newcommand{\CGx}{\mathcal{C}_x}
\newcommand{\CGp}{\mathcal{C}_p}
\newcommand{\fone}{f^{(1)}}
\newcommand{\ftwo}{f^{(2)}}
\newcommand{\utwo}{u^{(2)}}
\newcommand{\Lam}{\Lambda}
\newcommand{\Zm}{\mathcal{Z}_{\mathrm{meso}}}  
\newcommand{\Fm}{\mathcal{F}_{\mathrm{meso}}}  
\newcommand{\Hm}{\mathcal{H}_{\mathrm{meso}}}  
\newcommand{\Vm}{\mathcal{V}_{\mathrm{meso}}}  
\newcommand{\avg}[1]{\langle #1 \rangle}
\newcommand{\avgm}[1]{\langle #1 \rangle_{\mathrm{meso}}^{(0)}}
\newcommand{\avgo}[1]{\langle #1 \rangle_0}

\newcommand{\abs}[1]{\left|#1\right|}
\newcommand{\Ind}[1]{\mathbf{1}_{#1}}
\newcommand{\NN}{\mathcal{N}_N}          
\newcommand{\pip}{\pi^{(0)}}             

\begin{document}

\title{Perturbation Theory of the Free Energy via the\\
       Mesoscopic Combined Partition Function}

\author{Bob Osano\\[4pt]
  \small Centre for Higher Education Development,\\
  \small and Cosmology and Gravity Group,\\
  \small Department of Mathematics and Applied Mathematics,\\
  \small University of Cape Town (UCT),\\
  \small Rondebosch 7701, Cape Town, South Africa\\[2pt]
  \small \href{mailto:bob.osano@uct.ac.za}{bob.osano@uct.ac.za}}

\date{\today}
\maketitle

\begin{abstract}
We develop a systematic perturbation theory for the Helmholtz free energy of a classical $N$-body system within the mesoscopic framework of~\cite{OsanoMeso,OsanoExtensivity}. The combined coarse-graining operator $\mathcal{C}=\mathcal{C}_x\circ\mathcal{C}_p$ acting on single-particle phase space partitions it into product cells $C_{i,\alpha}=V_i\times\Pi_\alpha$ and generates a mesoscopic partition function $\mathcal{Z}_{\rm meso}(\lambda)$ whose reference level factorises by the multinomial theorem: $\mathcal{Z}_{\rm meso}^{(0)}=(Z_1^{(0)})^N$. Perturbation theory for $\mathcal{F}_{\rm meso}(\lambda)=-k_BT\ln\mathcal{Z}_{\rm meso}(\lambda)$ in the inter-cell perturbation $\mathcal{V}_{\rm meso}$ yields the mesoscopic Gibbs--Bogoliubov inequality and an exact coupling-parameter integration formula. The full free energy satisfies
\begin{equation*}
F(\lambda)=\mathcal{F}_{\rm meso}(\lambda)-k_BT\!\sum_{i<j}I(i,j;\lambda)+O\!\left(|\Lambda|\ell^{-d}e^{-2\ell/\xi}\right),
\end{equation*}
where the inter-cell mutual informations $I(i,j;\lambda)$ are the corrections identified in the extensivity analysis. The first-order theory recovers the van der Waals equation and the Barker--Henderson result; the second-order term converges to the structure-factor formula in the fine-cell limit. For long-range interactions, factorisation fails and the mutual-information corrections quantify the resulting non-extensivity.
\end{abstract}

\tableofcontents
\newpage

\section{Introduction}
\label{sec:intro}

The Helmholtz free energy $F$ is the fundamental object of classical equilibrium statistical mechanics: all thermodynamic quantities follow from it by differentiation.
Computing $F$ from first principles for a realistic interacting system requires either
an exact solution (rare) or a controlled approximation.  Perturbation theory, originating
with Zwanzig~\cite{Zwanzig1954} and Barker--Henderson~\cite{BarkerHenderson1967}, expands
$F$ in powers of a coupling parameter $\lambda$ about a soluble reference $F_0$, yielding
a cumulant series whose coefficients are equilibrium averages with respect to the reference
Gibbs state.

Independently,~\cite{OsanoExtensivity}  we have shown that the intensive--extensive classification of thermodynamic variables is itself emergent: it arises
precisely when a combined coarse-graining operator $\CG = \CGx\circ\CGp$, acting jointly on physical space and momentum space, produces mesoscopic variables that factorise across spatial cells. The coarse-grained entropy is additive if and only if inter-cell mutual information vanishes---a condition equivalent to exponential decay of correlations, which holds under the stability and temperedness of the pair potential.

In this paper, we unite these two lines of work.  The central object is the \emph{mesoscopic partition function} $\Zm(\lambda)$, constructed directly from the
combined coarse-graining framework of~\cite{OsanoExtensivity}: given the product-cell partition $\{C_{i,\alpha}\}$ of single-particle phase space and the cell-averaged
Hamiltonians $\bar\varepsilon_{i,\alpha}$ and inter-cell interactions $\bar v_{(i,\alpha)(j,\beta)}$, one sums the Boltzmann weight $e^{-\beta\Hm}$ over all occupation-number configurations weighted by the phase-space volume $\mathcal{W}$.  The full perturbation theory of $F$ is then derived at the mesoscopic level: each ingredient---the cumulant expansion, Gibbs--Bogoliubov bound, and coupling-parameter integration---has a precise mesoscopic
counterpart formulated in terms of $\{\pip_{i,\alpha}\}$ and $\{\bar v_{(i,\alpha)(j,\beta)}\}$.

This paper present the following main new results are:
\begin{enumerate}[label=(\roman*)]
  \item A \emph{rigorous definition} of the mesoscopic partition function $\Zm(\lambda)$
    from the combined coarse-graining operator of~\cite{OsanoExtensivity}, and its
    exact factorisation at $\lambda=0$ into $\Zm^{(0)} = (Z_1^{(0)})^N$.
  \item A \emph{mesoscopic cumulant expansion} of $\Fm(\lambda)$ in terms of
    occupation-number cumulants of $\Vm$ with respect to the reference multinomial
    distribution.
  \item A \emph{mesoscopic Gibbs--Bogoliubov inequality} and coupling-parameter integration
    formula, with explicit convergence to the standard continuum results as cells shrink.
  \item A \emph{connection formula}
    $F(\lambda) = \Fm(\lambda) - \kT\sum_{i<j}I(i,j;\lambda)
    + O(\abs{\Lam}\ell^{-d}e^{-2\ell/\xi})$,
    showing that the mutual information from the extensivity analysis is precisely the corrections between the mesoscopic and full partition functions.
  \item For systems with \emph{long-range interactions}, where factorisation fails, a systematic correction to the mesoscopic free energy expressed in terms of non-vanishing inter-cell mutual information.
\end{enumerate}

\paragraph{Organisation.}
\cref{sec:CG_framework} recalls the combined coarse-graining framework
of~\cite{OsanoExtensivity}.  \cref{sec:meso_Z} defines the mesoscopic partition function
and establishes its factorisation at zero coupling.  \cref{sec:meso_pert} develops the
mesoscopic perturbation theory: cumulant expansion, Gibbs--Bogoliubov inequality, and
coupling-parameter integration.  \cref{sec:connection} proves the connection formula
between $F(\lambda)$ and $\Fm(\lambda)$.  \cref{sec:first,sec:second} give the first- and
second-order theories explicitly.  \cref{sec:reference} applies the framework to three
reference systems.  \cref{sec:thermo,sec:convergence} treat thermodynamic properties and
convergence, and \cref{sec:conclusion} concludes.

\section{The Combined Coarse-Graining Framework}
\label{sec:CG_framework}

We recall the key elements of~\cite{OsanoExtensivity}, fixing notation for the rest
of the paper.

\subsection{The N-Body System and Reduced Densities}
\label{sub:N_body}

Consider $N$ identical classical particles of mass $m$ in a domain
$\Lam\subset\R^3$ of volume $V = |\Lam|$, with pairwise interaction
$\phi:\R^3\to\R\cup\{+\infty\}$.  The full Hamiltonian is
\begin{equation}\label{eq:Ham}
  H_N(\gamma) = \sum_{k=1}^N\frac{|\bm{p}_k|^2}{2m}
              + \sum_{1\le k<l\le N}\phi(\bm{q}_k-\bm{q}_l),
  \quad
  \gamma = (\bm{q}_1,\dots,\bm{q}_N,\bm{p}_1,\dots,\bm{p}_N)\in\GN.
\end{equation}
We split $H_N = H_0 + \lambda V$, where $H_0$ is a soluble reference Hamiltonian and
$V$ is the perturbation, with $\lambda\in[0,1]$.  The canonical Gibbs state at coupling
$\lambda$ is $\varrho_{N,\lambda} = Z(\lambda)^{-1}e^{-\beta(H_0+\lambda V)}$.

The \emph{reduced one-particle density} is defined by integrating out $N-1$ particles:
\begin{equation}\label{eq:f1}
  \fone_\lambda(z) = N\int_{\GN}\varrho_{N,\lambda}(\gamma)\prod_{k=2}^N\dd z_k,
  \qquad z=(q,p)\in\Ms := \Lam\times\R^3,
\end{equation}
normalised to $\int_\Ms\fone_\lambda\dd z = N$.  For a pair-potential perturbation,
the \emph{reduced two-particle density} $\ftwo_\lambda(z_1,z_2)$ satisfies the Ursell
decomposition
\begin{equation}\label{eq:ursell}
  \ftwo_\lambda(z_1,z_2) = \fone_\lambda(z_1)\,\fone_\lambda(z_2) + \utwo_\lambda(z_1,z_2),
\end{equation}
where the pair Ursell function $\utwo_\lambda$ encodes connected two-body correlations and
obeys $\int_\Ms\utwo_\lambda(z_1,z_2)\dd z_2=0$ for all $z_1$.

\subsection{The Combined Coarse-Graining Operator}
\label{sub:CG_op}

Following~\cite{OsanoExtensivity}, we partition the spatial domain $\Lam$ into disjoint
cells $\{V_i\}_{i\in\mathcal{I}}$ of volume $\ell^d$ and momentum space $\R^3$ into cells
$\{\Pi_\alpha\}_{\alpha\in\mathcal{A}}$ of volume $\nu_\alpha$.  The product cells
\begin{equation}\label{eq:cells}
  C_{i,\alpha} := V_i\times\Pi_\alpha\subset\Ms
\end{equation}
partition $\Ms$.  The spatial cell diameter $\ell$ satisfies the scale separation
condition of~\cite{OsanoExtensivity}:
\begin{equation}\label{eq:scale_sep}
  \xi\ll\ell\ll L := \mathrm{diam}(\Lam),
\end{equation}
where $\xi$ is the correlation length and $L$ is the system size.

\begin{definition}[Combined coarse-graining operator, \cite{OsanoExtensivity}]
\label{def:CG}
  The \emph{combined coarse-graining operator} $\CG$ maps $\fone_\lambda\in L^1(\Ms)$ to
  a piecewise-constant function:
  \begin{equation}\label{eq:CG_op}
    (\CG\fone_\lambda)(z)
    = \sum_{i,\alpha}\bar{f}_{i,\alpha}(\lambda)\,\Ind{C_{i,\alpha}}(z),
    \quad
    \bar{f}_{i,\alpha}(\lambda)
    = \frac{1}{|C_{i,\alpha}|}\int_{C_{i,\alpha}}\fone_\lambda(z')\dd z'.
  \end{equation}
\end{definition}

\begin{definition}[Mesoscopic probability, \cite{OsanoExtensivity}]
\label{def:pi}
  The dimensionless \emph{mesoscopic probability} of cell $C_{i,\alpha}$ at coupling
  $\lambda$ is
  \begin{equation}\label{eq:pi}
    \pi_{i,\alpha}(\lambda)
    := \frac{|C_{i,\alpha}|}{N}\bar{f}_{i,\alpha}(\lambda)
     = \frac{1}{N}\int_{C_{i,\alpha}}\fone_\lambda(z)\dd z.
  \end{equation}
  These satisfy $\pi_{i,\alpha}(\lambda)\ge 0$ and $\sum_{i,\alpha}\pi_{i,\alpha}(\lambda)=1$
  for all $\lambda$.  At $\lambda=0$ we write $\pip_{i,\alpha}:=\pi_{i,\alpha}(0)$.
\end{definition}

\begin{definition}[Coarse-grained entropy, \cite{OsanoExtensivity}]
\label{def:CG_entropy}
  The coarse-grained Boltzmann entropy at coupling $\lambda$ is
  \begin{equation}\label{eq:CG_S}
    S_{\mathrm{CG}}(\lambda) = -\kb\sum_{i,\alpha}\pi_{i,\alpha}(\lambda)\,
    \ln\pi_{i,\alpha}(\lambda).
  \end{equation}
\end{definition}

The extensivity theorem of~\cite{OsanoExtensivity} asserts: under stability, temperedness,
and exponential cluster decomposition with correlation length $\xi$, the entropy satisfies
$S_{\mathrm{CG}}(\lambda) = \sum_i S_i(\lambda) + O(|\Lam|\ell^{-d}e^{-\ell/\xi})$,
with equality (exact additivity) if and only if the mesoscopic probabilities factorise
across spatial cells:
\begin{equation}\label{eq:factorisation}
  \pi_{(i,\alpha)(j,\beta)}(\lambda) \approx
  \pi_{i,\alpha}(\lambda)\,\pi_{j,\beta}(\lambda)
  \quad\text{up to corrections } O(N^{-1}e^{-|x_i-x_j|/\xi}).
\end{equation}

\section{The Mesoscopic Partition Function}
\label{sec:meso_Z}

\subsection{Cell-Averaged Hamiltonians}
\label{sub:cell_Ham}

Given the decomposition $H_N = H_0 + \lambda V$ with $V = \sum_{k<l}v(\bm{q}_k-\bm{q}_l)$,
we define cell-averaged energies through the one-particle reduced density.

\begin{definition}[Cell-averaged reference energy]\label{def:eps}
  The \emph{cell-averaged single-particle reference energy} in cell $C_{i,\alpha}$ is
  \begin{equation}\label{eq:eps_bar}
    \bar\varepsilon_{i,\alpha}^{(0)}
    := \frac{1}{|C_{i,\alpha}|}
       \int_{C_{i,\alpha}} \varepsilon_0(z)\,\dd z,
    \quad
    \varepsilon_0(q,p) := \frac{|p|^2}{2m} + u_0(q),
  \end{equation}
  where $u_0(q)$ is the single-particle external potential in $H_0$.  For the pair-potential
  perturbation, the \emph{cell-averaged inter-cell pair interaction} between cells
  $C_{i,\alpha}$ and $C_{j,\beta}$ (with $i\ne j$ or $\alpha\ne\beta$) is
  \begin{equation}\label{eq:v_bar}
    \bar{v}_{(i,\alpha)(j,\beta)}
    := \frac{1}{|C_{i,\alpha}|\,|C_{j,\beta}|}
       \int_{C_{i,\alpha}}\int_{C_{j,\beta}}
       v(|\bm{q}_1-\bm{q}_2|)\,\dd z_1\,\dd z_2.
  \end{equation}
  Both quantities are well-defined, finite, and depend only on the geometry of the cells
  and the form of the potentials.
\end{definition}

\subsection{The Mesoscopic Hamiltonian}
\label{sub:Hm}

\begin{definition}[Mesoscopic Hamiltonian]\label{def:Hm}
  The \emph{mesoscopic Hamiltonian} for an occupation-number configuration
  $\{n_{i,\alpha}\}\in\NN := \{\{n_{i,\alpha}\}\in\Z_{\ge 0}^{|\mathcal{I}|\cdot|\mathcal{A}|}
  : \sum_{i,\alpha}n_{i,\alpha}=N\}$ at coupling $\lambda$ is
  \begin{equation}\label{eq:Hm_def}
    \Hm(\{n_{i,\alpha}\},\lambda)
    := \underbrace{\sum_{i,\alpha}n_{i,\alpha}\,\bar\varepsilon_{i,\alpha}^{(0)}}_{\text{reference (intra-cell)}}
     + \frac{\lambda}{2}\underbrace{\sum_{\substack{(i,\alpha)\ne(j,\beta)}}
       n_{i,\alpha}\,n_{j,\beta}\,\bar{v}_{(i,\alpha)(j,\beta)}}_{\text{inter-cell perturbation}}.
  \end{equation}
  The occupation number $n_{i,\alpha}\in\{0,1,\dots,N\}$ counts the particles occupying
  cell $C_{i,\alpha}$.
\end{definition}

\begin{remark}
  The first sum in \eqref{eq:Hm_def} captures the kinetic energy and single-particle
  reference potential, coarse-grained to the cell level.  The second sum captures
  pair interactions \emph{between} distinct cells at coupling $\lambda$; the factor
  $1/2$ avoids double-counting.  Intra-cell pair interactions
  ($n_{i,\alpha}(n_{i,\alpha}-1)/2$ pairs within a single cell) are absorbed into
  $\bar\varepsilon_{i,\alpha}^{(0)}$ in the reference, reflecting the choice of
  $H_0$ as already capturing short-range repulsion within each cell.
\end{remark}

\subsection{Phase-Space Volume Weight}
\label{sub:weight}

Each occupation-number configuration $\{n_{i,\alpha}\}$ is compatible with a number of
distinct microstates in the original phase space $\GN$: we must count how many ways to
assign $N$ distinguishable particles to the cells.

\begin{definition}[Phase-space volume factor]\label{def:W}
  The \emph{phase-space volume factor} for configuration $\{n_{i,\alpha}\}\in\NN$ is
  \begin{equation}\label{eq:W_def}
    \mathcal{W}(\{n_{i,\alpha}\})
    := \frac{N!}{\prod_{i,\alpha}n_{i,\alpha}!}
       \prod_{i,\alpha}|C_{i,\alpha}|^{n_{i,\alpha}}.
  \end{equation}
  The multinomial coefficient $N!/\prod n_{i,\alpha}!$ counts the number of ways to
  assign $N$ distinguishable particles to the cells; $|C_{i,\alpha}|^{n_{i,\alpha}}$ is
  the phase-space volume available to the $n_{i,\alpha}$ particles assigned to cell
  $C_{i,\alpha}$.
\end{definition}

\subsection{Definition of the Mesoscopic Partition Function}
\label{sub:Zm_def}

\begin{definition}[Mesoscopic partition function]\label{def:Zm}
  The \emph{mesoscopic (combined) partition function} at coupling $\lambda$ is
  \begin{equation}\label{eq:Zm_def}
    \Zm(\lambda)
    := \sum_{\{n_{i,\alpha}\}\in\NN}
       \mathcal{W}(\{n_{i,\alpha}\})\,
       e^{-\beta\,\Hm(\{n_{i,\alpha}\},\lambda)}.
  \end{equation}
  It is a discrete sum over all occupation-number configurations consistent with total
  particle number $N$, with each configuration weighted by both its phase-space volume
  and its Boltzmann factor.  The mesoscopic free energy is
  $\Fm(\lambda) := -\kT\ln\Zm(\lambda)$.
\end{definition}

\begin{remark}[Connection to the coarse-graining operator]
  The sum \eqref{eq:Zm_def} can be read as the partition function of the system in which
  the combined coarse-graining operator $\CG$ has replaced the microscopic density
  $\varrho_N$ by its cell-average $(\CG\varrho_N)$: integrating $e^{-\beta H_N}$ over the
  coarse-grained description of phase space precisely yields
  $\Zm(\lambda)$ when the cell-averaged energy is used.  This is the explicit partition
  function of the coarse-grained state introduced in~\cite{OsanoExtensivity}.
\end{remark}

\subsection{Factorisation of the Reference Mesoscopic Partition Function}
\label{sub:Zm_fact}

\begin{proposition}[Reference factorisation]\label{prop:Zm0_fact}
  At $\lambda=0$, the mesoscopic partition function factorises into the $N$-th power of
  the single-cell partition function:
  \begin{equation}\label{eq:Zm0_fact}
    \Zm^{(0)} := \Zm(0)
    = \left(\sum_{i,\alpha}|C_{i,\alpha}|\,e^{-\beta\bar\varepsilon_{i,\alpha}^{(0)}}\right)^N
    =: \bigl(Z_1^{(0)}\bigr)^N,
  \end{equation}
  where $Z_1^{(0)} = \sum_{i,\alpha}|C_{i,\alpha}|\,e^{-\beta\bar\varepsilon_{i,\alpha}^{(0)}}$
  is the single-cell partition function of the reference.
\end{proposition}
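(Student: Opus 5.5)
The plan is to evaluate the sum \eqref{eq:Zm_def} at $\lambda=0$ directly and recognise it as a multinomial expansion. Setting $\lambda=0$ in \eqref{eq:Hm_def} removes the inter-cell term, so only the reference part survives:
\[
\Hm(\{n_{i,\alpha}\},0)=\sum_{i,\alpha}n_{i,\alpha}\,\bar\varepsilon^{(0)}_{i,\alpha}.
\]
The Boltzmann factor is therefore a product over cells, $e^{-\beta\Hm}=\prod_{i,\alpha}\bigl(e^{-\beta\bar\varepsilon^{(0)}_{i,\alpha}}\bigr)^{n_{i,\alpha}}$. Combining this with the phase-space factor \eqref{eq:W_def}, each summand of $\Zm(0)$ becomes
\[
\frac{N!}{\prod_{i,\alpha}n_{i,\alpha}!}\prod_{i,\alpha}\Bigl(|C_{i,\alpha}|\,e^{-\beta\bar\varepsilon^{(0)}_{i,\alpha}}\Bigr)^{n_{i,\alpha}}.
\]

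Next I set $x_{i,\alpha}:=|C_{i,\alpha}|e^{-\beta\bar\varepsilon^{(0)}_{i,\alpha}}$ and note that the index set $\NN$ consists exactly of the nonnegative integer vectors with $\sum n_{i,\alpha}=N$. The multinomial theorem, $\bigl(\sum_{i,\alpha}x_{i,\alpha}\bigr)^N=\sum_{\NN}\frac{N!}{\prod n_{i,\alpha}!}\prod x_{i,\alpha}^{n_{i,\alpha}}$, then gives \eqref{eq:Zm0_fact} immediately.

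The only step requiring care is justifying the multinomial theorem here, because the index set $\mathcal{I}\times\mathcal{A}$ is infinite: momentum space $\R^3$ is partitioned into infinitely many cells $\Pi_\alpha$.
\begin{itemize}[label={}]
\item \emph{Finite case.} If the partition is finite, the identity is purely algebraic.
\item \emph{Infinite case.} All terms are nonnegative. I will therefore apply the finite identity to truncations of the cell index set and pass to the limit by monotone convergence, or equivalently Tonelli for counting measure.
\item \emph{Finiteness of the limit.} I must check that $Z_1^{(0)}<\infty$. Since $\varepsilon_0$ contains $|p|^2/2m$ and $u_0$ is bounded below (a stability assumption), Jensen's inequality applied to the convex function $e^{-\beta\cdot}$ gives $|C_{i,\alpha}|e^{-\beta\bar\varepsilon^{(0)}_{i,\alpha}}\le\int_{C_{i,\alpha}}e^{-\beta\varepsilon_0}\dd z$. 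Summing over cells yields $Z_1^{(0)}\le\int_\Ms e^{-\beta\varepsilon_0}\dd z<\infty$, using the Gaussian momentum integral and the finite volume $|\Lam|$.
\end{itemize}
With this bound both sides of \eqref{eq:Zm0_fact} are finite and equal, which completes the argument.
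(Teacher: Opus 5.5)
Your proof is correct and follows the paper's own argument: set $\lambda=0$ so that $\Hm$ reduces to the one-body term, factorise the Boltzmann weight cell by cell, and apply the multinomial theorem with $x_{i,\alpha}=|C_{i,\alpha}|e^{-\beta\bar\varepsilon^{(0)}_{i,\alpha}}$. Your extra care about the infinite momentum-cell index set is a welcome addition the paper omits. This means the truncation/Tonelli step for nonnegative terms, together with the Jensen bound $Z_1^{(0)}\le\int_\Ms e^{-\beta\varepsilon_0}\dd z<\infty$ (assuming $u_0$ bounded below and finite cell volumes).
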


\begin{proof}
  At $\lambda=0$, the mesoscopic Hamiltonian reduces to
  $\Hm^{(0)} = \sum_{i,\alpha}n_{i,\alpha}\bar\varepsilon_{i,\alpha}^{(0)}$,
  so the exponent factorises over cells:
  $e^{-\beta\Hm^{(0)}} = \prod_{i,\alpha}e^{-\beta n_{i,\alpha}\bar\varepsilon_{i,\alpha}^{(0)}}$.
  Therefore
  \begin{align}
    \Zm^{(0)}
    &= \sum_{\{n_{i,\alpha}\}\in\NN}
       \frac{N!}{\prod_{i,\alpha}n_{i,\alpha}!}
       \prod_{i,\alpha}\!\bigl(|C_{i,\alpha}|
       e^{-\beta\bar\varepsilon_{i,\alpha}^{(0)}}\bigr)^{n_{i,\alpha}} \notag\\
    &= \biggl(\sum_{i,\alpha}|C_{i,\alpha}|\,e^{-\beta\bar\varepsilon_{i,\alpha}^{(0)}}\biggr)^N
       = \bigl(Z_1^{(0)}\bigr)^N,
  \end{align}
  where the second equality is the multinomial theorem.
\end{proof}

\begin{corollary}[Reference mesoscopic free energy and probabilities]\label{cor:Fm0}
  The reference mesoscopic free energy is
  \begin{equation}\label{eq:Fm0}
    \Fm^{(0)} = -N\kT\ln Z_1^{(0)},
  \end{equation}
  and the reference mesoscopic probability of cell $C_{i,\alpha}$ is
  \begin{equation}\label{eq:pip_Z1}
    \pip_{i,\alpha}
    = \frac{|C_{i,\alpha}|\,e^{-\beta\bar\varepsilon_{i,\alpha}^{(0)}}}{Z_1^{(0)}},
  \end{equation}
  which coincides with the mesoscopic probability of \cref{def:pi} at $\lambda=0$:
  $\pip_{i,\alpha} = \pi_{i,\alpha}(0)$.  The occupation numbers follow the reference
  multinomial distribution
  \begin{equation}\label{eq:ref_multinomial}
    P_{\rm meso}^{(0)}(\{n_{i,\alpha}\})
    = \frac{N!}{\prod_{i,\alpha}n_{i,\alpha}!}
      \prod_{i,\alpha}(\pip_{i,\alpha})^{n_{i,\alpha}},
  \end{equation}
  with mean $\avgm{n_{i,\alpha}} = N\pip_{i,\alpha}$ and variance
  $\avgm{(\delta n_{i,\alpha})^2} = N\pip_{i,\alpha}(1-\pip_{i,\alpha})$.
\end{corollary}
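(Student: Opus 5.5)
The free energy, the multinomial law and its moments follow from \cref{prop:Zm0_fact} by direct computation. The identification $\pip_{i,\alpha}=\pi_{i,\alpha}(0)$ needs a separate argument, and that is the step I expect to be the main obstacle.

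\textbf{Free energy.} Taking $-\kT\ln$ of \eqref{eq:Zm0_fact} immediately gives $\Fm^{(0)}=-N\kT\ln Z_1^{(0)}$.

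\textbf{Multinomial law.} The probability the mesoscopic ensemble at $\lambda=0$ assigns to a configuration is
\[
P_{\rm meso}^{(0)}(\{n_{i,\alpha}\})=\mathcal{W}(\{n_{i,\alpha}\})\,e^{-\beta\Hm^{(0)}}\big/\Zm^{(0)} .
\]
I will insert \eqref{eq:W_def}, use the product form of $e^{-\beta\Hm^{(0)}}$ from the proof of \cref{prop:Zm0_fact}, and write $(Z_1^{(0)})^N=\prod_{i,\alpha}(Z_1^{(0)})^{n_{i,\alpha}}$, which is valid since $\sum n_{i,\alpha}=N$. Each factor then becomes $(|C_{i,\alpha}|e^{-\beta\bar\varepsilon^{(0)}_{i,\alpha}}/Z_1^{(0)})^{n_{i,\alpha}}$. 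Defining $\pip_{i,\alpha}$ by \eqref{eq:pip_Z1} gives exactly \eqref{eq:ref_multinomial}, and these weights sum to one by construction.

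\textbf{Moments.} I will use the generating-function trick. Replace $|C_{i,\alpha}|e^{-\beta\bar\varepsilon^{(0)}_{i,\alpha}}$ by $t_{i,\alpha}|C_{i,\alpha}|e^{-\beta\bar\varepsilon^{(0)}_{i,\alpha}}$; the multinomial theorem then gives $\sum P\prod t^{n}=(\sum_{i,\alpha}\pip_{i,\alpha}t_{i,\alpha})^N$. Applying $t\partial_t$ once and twice at $t\equiv1$ yields $\avgm{n_{i,\alpha}}=N\pip_{i,\alpha}$ and $\avgm{n_{i,\alpha}^2}=N\pip_{i,\alpha}+N(N-1)(\pip_{i,\alpha})^2$. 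Hence $\avgm{(\delta n_{i,\alpha})^2}=N\pip_{i,\alpha}(1-\pip_{i,\alpha})$. Equivalently, marginalising onto one cell gives a binomial$(N,\pip_{i,\alpha})$ law.

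\textbf{Identification with \cref{def:pi} (main obstacle).} I would first read ``mesoscopic probability at $\lambda=0$'' within the mesoscopic ensemble, as $\avgm{n_{i,\alpha}}/N$; then the previous step gives $\pip_{i,\alpha}$ directly. For the microscopic definition \eqref{eq:pi} I would argue as follows.
\begin{itemize}
\item At $\lambda=0$ the reference Hamiltonian is taken as $H_0=\sum_k\varepsilon_0(z_k)$, with intra-cell repulsion absorbed into $\varepsilon_0$ as in the remark after \cref{def:Hm}.
\item Then $\varrho_{N,0}$ is a product measure, so $\fone_0(z)=N e^{-\beta\varepsilon_0(z)}/\int_\Ms e^{-\beta\varepsilon_0}$.
\item Consequently $\pi_{i,\alpha}(0)=\int_{C_{i,\alpha}}e^{-\beta\varepsilon_0}\big/\sum_{j,\beta}\int_{C_{j,\beta}}e^{-\beta\varepsilon_0}$.
\end{itemize}
This equals \eqref{eq:pip_Z1} exactly when $\int_{C_{i,\alpha}}e^{-\beta\varepsilon_0}=|C_{i,\alpha}|e^{-\beta\bar\varepsilon^{(0)}_{i,\alpha}}$ for every cell, for instance when $\varepsilon_0$ is constant on each cell.

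In general, Jensen's inequality gives only $\int_C e^{-\beta\varepsilon_0}\ge|C|e^{-\beta\bar\varepsilon}$. The discrepancy is controlled by the intra-cell oscillation of $\varepsilon_0$, with a relative error $O(\beta\,\mathrm{osc}_{C}\varepsilon_0)$, and it vanishes as the cells are refined. I would therefore present the identity as exact when $\varepsilon_0$ is cellwise constant, or within the mesoscopic ensemble, and otherwise as holding up to this fine-cell correction. A slightly stronger route would redefine $\bar\varepsilon^{(0)}_{i,\alpha}$ as the log-mean $-\beta^{-1}\ln\bigl(|C|^{-1}\int_C e^{-\beta\varepsilon_0}\bigr)$, which would make the identity exact.
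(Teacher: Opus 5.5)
Your proposal is correct, and it is more complete than the paper's own argument. The paper obtains $\Fm^{(0)}$ from \eqref{eq:Zm0_fact} exactly as you do, but it never derives the multinomial law or its moments. It takes $\avgm{n_{i,\alpha}}=N\pip_{i,\alpha}$ as given and pairs it with $Z_1^{(0)}\pip_{i,\alpha}=|C_{i,\alpha}|e^{-\beta\bar\varepsilon^{(0)}_{i,\alpha}}$, which is just \eqref{eq:pip_Z1} restated. It then calls $\pip_{i,\alpha}$ ``the fraction of the total phase-space weight in cell $C_{i,\alpha}$'' and asserts that dividing by $N$ recovers \cref{def:pi}. In effect the paper proves only your first reading, $\pip_{i,\alpha}=\avgm{n_{i,\alpha}}/N$ inside the mesoscopic ensemble. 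Your normalisation of $\mathcal{W}e^{-\beta\Hm^{(0)}}/\Zm^{(0)}$ and the generating function $\bigl(\sum_{i,\alpha}\pip_{i,\alpha}t_{i,\alpha}\bigr)^N$ supply the steps the paper omits. Your separate treatment of \cref{def:pi} also locates what the paper's identification skips. The ``phase-space weight'' there is the cell-averaged weight $|C_{i,\alpha}|e^{-\beta\bar\varepsilon^{(0)}_{i,\alpha}}$, not the microscopic weight $\int_{C_{i,\alpha}}e^{-\beta\varepsilon_0}$ that determines $\fone_0$, and the two differ by your Jensen factor.

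Two sharpenings of that last step. First, the exact condition is weaker than the one you state. After normalisation you only need the ratio $\int_{C_{i,\alpha}}e^{-\beta\varepsilon_0}\big/\bigl(|C_{i,\alpha}|e^{-\beta\bar\varepsilon^{(0)}_{i,\alpha}}\bigr)$ to be the same for every cell, not equal to $1$. Second, your sufficient example never occurs for cells of positive measure, since $|p|^2/2m$ is not constant on any momentum cell. Under the uniform measure on the cell, the ratio equals $1+\tfrac{\beta^2}{2}\mathrm{Var}_{C_{i,\alpha}}(\varepsilon_0)+\cdots$, so your error is in fact second order in $\beta\,\mathrm{osc}_{C}\varepsilon_0$. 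For a generic partition this ratio depends on where the cell sits in momentum space.

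Hence, with the arithmetic mean \eqref{eq:eps_bar}, the clause $\pip_{i,\alpha}=\pi_{i,\alpha}(0)$ is generically false as an exact identity, even for a one-body $H_0$. For an interacting reference (hard spheres, WCA), your product-measure step is an additional hypothesis, because the remark after \cref{def:Hm} absorbs repulsion only at the mesoscopic level. Of your repairs, the log-mean redefinition of $\bar\varepsilon^{(0)}_{i,\alpha}$ is the cleanest. For a one-body reference it makes the clause exact at every cell size and also gives $\Zm^{(0)}=Z(0)$ exactly. With the arithmetic mean, Jensen yields only $\Fm^{(0)}\ge F(0)$.
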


\begin{proof}
  Equation \eqref{eq:Fm0} follows from \eqref{eq:Zm0_fact}.  For \eqref{eq:pip_Z1},
  the expected occupation $\avgm{n_{i,\alpha}} = N\pip_{i,\alpha}$ together with the
  factorisation $Z_1^{(0)}\pip_{i,\alpha} = |C_{i,\alpha}|e^{-\beta\bar\varepsilon_{i,\alpha}^{(0)}}$
  identifies $\pip_{i,\alpha}$ as the fraction of the total phase-space weight in cell
  $C_{i,\alpha}$ at the reference level.  Dividing by $N$ gives the mesoscopic probability
  per particle, recovering \cref{def:pi}.
\end{proof}

\section{Mesoscopic Perturbation Theory}
\label{sec:meso_pert}

\subsection{The Mesoscopic Moment Generating Function}
\label{sub:meso_MGF}

Separating the reference and perturbation in \eqref{eq:Zm_def}:
\begin{equation}\label{eq:Zm_split}
  \Zm(\lambda)
  = \sum_{\{n_{i,\alpha}\}\in\NN}
    \mathcal{W}(\{n_{i,\alpha}\})\,
    e^{-\beta\Hm^{(0)}}\,
    e^{-\beta\lambda\Vm},
\end{equation}
where $\Vm(\{n_{i,\alpha}\}) := \frac{1}{2}\sum_{(i,\alpha)\ne(j,\beta)}
n_{i,\alpha}n_{j,\beta}\bar{v}_{(i,\alpha)(j,\beta)}$ is the inter-cell perturbation.
Therefore
\begin{equation}\label{eq:Zm_MGF}
  \Zm(\lambda) = \Zm^{(0)}\cdot\avgm{e^{-\beta\lambda\Vm}},
\end{equation}
and the mesoscopic free energy is
\begin{equation}\label{eq:Fm_logM}
  \Fm(\lambda) = \Fm^{(0)} - \kT\ln\avgm{e^{-\beta\lambda\Vm}}.
\end{equation}
The \emph{mesoscopic moment generating function} $M_{\rm meso}(\lambda) :=
\avgm{e^{-\beta\lambda\Vm}}$ takes the form
\begin{equation}\label{eq:Mm}
  M_{\rm meso}(\lambda)
  = \sum_{\{n_{i,\alpha}\}} P_{\rm meso}^{(0)}(\{n_{i,\alpha}\})\,
    e^{-\beta\lambda\Vm(\{n_{i,\alpha}\})}.
\end{equation}

\subsection{The Mesoscopic Cumulant Expansion}
\label{sub:meso_cumulants}

Define the mesoscopic cumulants of $\Vm$ with respect to the reference distribution
\eqref{eq:ref_multinomial}:
\begin{align}
  \kappa_1^{\rm meso} &:= \avgm{\Vm}
    = \frac{N(N-1)}{2}\sum_{(i,\alpha)\ne(j,\beta)}
      \pip_{i,\alpha}\pip_{j,\beta}\bar{v}_{(i,\alpha)(j,\beta)}, \label{eq:kappa1_meso}\\
  \kappa_2^{\rm meso} &:= \avgm{(\Vm-\avgm{\Vm})^2}, \label{eq:kappa2_meso}\\
  \kappa_n^{\rm meso} &:= n\text{-th cumulant of }\Vm\text{ under }P_{\rm meso}^{(0)}.
  \label{eq:kappan_meso}
\end{align}
The mesoscopic cumulant generating function is
$K_{\rm meso}(\lambda) = \ln M_{\rm meso}(\lambda) = \sum_{n=1}^\infty
\frac{(-\beta\lambda)^n}{n!}\kappa_n^{\rm meso}$, giving the
\emph{mesoscopic cumulant expansion of the free energy}:
\begin{equation}\label{eq:Fm_cumulant}
  \Fm(\lambda)
  = \Fm^{(0)}
    + \lambda\kappa_1^{\rm meso}
    - \frac{\beta\lambda^2}{2}\kappa_2^{\rm meso}
    + \frac{\beta^2\lambda^3}{6}\kappa_3^{\rm meso}
    - \cdots
\end{equation}
This is the mesoscopic analogue of the standard cumulant
expansion~\cite{Zwanzig1954,BarkerHenderson1967}.  At $\lambda=1$ it gives the
perturbation series for the full mesoscopic free energy $\Fm(1)$.

\begin{remark}[Extensivity of the mesoscopic cumulants]
  Under the multinomial distribution~\eqref{eq:ref_multinomial}, the occupation numbers
  $n_{i,\alpha}$ are independent with mean $N\pip_{i,\alpha}$.  Since $\Vm$ is a sum
  of bilinear terms $n_{i,\alpha}n_{j,\beta}$, each cumulant $\kappa_n^{\rm meso}$ is
  $O(N^2)$ for the first cumulant (mean) and $O(N^2)$ for the variance, both
  growing extensively with system size.  This is consistent with the linked-cluster
  theorem (\cref{sec:linked}), which guarantees that each free-energy correction is
  $O(N)$.
\end{remark}

\subsection{The Mesoscopic Gibbs--Bogoliubov Inequality}
\label{sub:meso_GB}

\begin{theorem}[Mesoscopic Gibbs--Bogoliubov inequality]\label{thm:meso_GB}
  For any coupling $\lambda\ge 0$,
  \begin{equation}\label{eq:meso_GB}
    \Fm(\lambda) \le \Fm^{(0)} + \lambda\kappa_1^{\rm meso}
    = \Fm^{(0)} + \lambda\avgm{\Vm},
  \end{equation}
  with equality iff $\Vm$ is constant on the support of $P_{\rm meso}^{(0)}$.
\end{theorem}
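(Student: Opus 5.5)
The plan is to apply Jensen's inequality to the convex function $x\mapsto e^{x}$ under the reference multinomial distribution $P_{\rm meso}^{(0)}$ of \cref{cor:Fm0}, starting from the representation \eqref{eq:Fm_logM}. Since $\NN$ is a finite set, every quantity is a finite sum, $\Vm$ is bounded on $\NN$, and all averages $\avgm{\cdot}$ exist. There are therefore no convergence issues, and the argument is purely finite-dimensional.

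\textbf{Step 1.} I will write $M_{\rm meso}(\lambda)=\avgm{e^{-\beta\lambda\Vm}}$ as in \eqref{eq:Mm}. Jensen's inequality gives
\[
\avgm{e^{-\beta\lambda\Vm}}\ \ge\ e^{-\beta\lambda\avgm{\Vm}}.
\]

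\textbf{Step 2.} Taking logarithms (log is monotone) and multiplying by $-\kT<0$, which reverses the inequality, I insert the result into \eqref{eq:Fm_logM}. This yields
\[
\Fm(\lambda)\le \Fm^{(0)}+\lambda\avgm{\Vm},
\]
and by \eqref{eq:kappa1_meso} the right-hand side equals $\Fm^{(0)}+\lambda\kappa_1^{\rm meso}$.

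Strictly speaking, the inequality holds for every real $\lambda$. The restriction $\lambda\ge0$ in the statement is harmless.

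\textbf{Step 3 (equality case).} This is the only point needing care. If $\lambda=0$, equality holds trivially, so I will state the equality clause for $\lambda>0$, or note that it concerns $\lambda>0$. For $\lambda>0$, strict convexity of $\exp$ means Jensen's inequality is an equality iff $-\beta\lambda\Vm$ is $P_{\rm meso}^{(0)}$-almost surely constant. Since $\beta\lambda\neq0$, this is the same as $\Vm$ being constant on $\{P_{\rm meso}^{(0)}>0\}$.

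To make this self-contained, I would give the elementary argument. Set $X=-\beta\lambda(\Vm-\avgm{\Vm})$ and use $e^{x}\ge 1+x$, with equality iff $x=0$. Then $\avgm{e^{X}}\ge 1+\avgm{X}=1$, with equality iff $X=0$ at every configuration of positive probability.

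Finally I note where that support lies. By \eqref{eq:pip_Z1}, $\pip_{i,\alpha}>0$ for every cell, since $|C_{i,\alpha}|>0$ and $\bar\varepsilon^{(0)}_{i,\alpha}$ is finite. Hence the support of the multinomial distribution is all of $\NN$. The equality condition therefore reduces to $\Vm$ being constant on $\NN$. I expect no genuine obstacle here; the only thing to state explicitly is the treatment of the degenerate case $\lambda=0$.
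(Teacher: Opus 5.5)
Your proof is correct and follows the paper's argument: Jensen's inequality for $x\mapsto e^{x}$ under $P_{\rm meso}^{(0)}$, then logarithms and multiplication by $-\kT<0$. You have the sign right here; the paper's text slips to ``$-\kT>0$''. Your equality analysis, including the caveat that equality is automatic at $\lambda=0$, goes beyond the paper's proof, which does not address the equality clause. One small correction: $\NN$ is in general countably infinite, not finite, since a partition of $\R^3$ into finite-volume momentum cells needs infinitely many cells. This does not affect the argument, because $\bar v_{(i,\alpha)(j,\beta)}$ depends only on the finitely many spatial cells, so $\Vm$ remains bounded on $\NN$.
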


\begin{proof}
  Identical in structure to the standard proof~\cite{Peierls1938}: Jensen's inequality
  applied to the convex function $x\mapsto e^x$ and the probability distribution
  $P_{\rm meso}^{(0)}$ gives
  $M_{\rm meso}(\lambda) = \avgm{e^{-\beta\lambda\Vm}} \ge
  e^{-\beta\lambda\avgm{\Vm}}$.
  Taking logarithms and multiplying by $-\kT>0$:
  $-\kT\ln M_{\rm meso}(\lambda)\le\lambda\avgm{\Vm}$.
  Adding $\Fm^{(0)}$ gives \eqref{eq:meso_GB}.
\end{proof}

\begin{corollary}
  The second mesoscopic cumulant $\kappa_2^{\rm meso}\ge 0$, so the term
  $-(\beta\lambda^2/2)\kappa_2^{\rm meso}\le 0$ in \eqref{eq:Fm_cumulant} is always
  non-positive: inter-cell fluctuations always lower the mesoscopic free energy below
  the Gibbs--Bogoliubov bound.
\end{corollary}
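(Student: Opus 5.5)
The corollary has two parts. The first is $\kappa_2^{\rm meso}\ge 0$. The second is the sign of the second-order term in \eqref{eq:Fm_cumulant}, together with its interpretation relative to the bound of \cref{thm:meso_GB}. I would prove them in that order.

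\emph{Step 1: non-negativity of $\kappa_2^{\rm meso}$.} By \eqref{eq:kappa2_meso}, $\kappa_2^{\rm meso}=\avgm{(\Vm-\avgm{\Vm})^2}$. This is the expectation of a non-negative random variable under the probability distribution $P_{\rm meso}^{(0)}$ of \eqref{eq:ref_multinomial}. The weights $P_{\rm meso}^{(0)}(\{n_{i,\alpha}\})\ge 0$ sum to one by the multinomial theorem, as in \cref{prop:Zm0_fact}. The state space $\NN$ is finite and $\Vm$ is finite on it by \cref{def:eps}, so no integrability issue arises and $\kappa_2^{\rm meso}\ge 0$ follows immediately. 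Equality holds iff $\Vm$ is $P^{(0)}_{\rm meso}$-a.s.\ constant, which is the same equality condition as in \cref{thm:meso_GB}.

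\emph{Step 2: sign of the term.} Since $\beta=1/\kT>0$ and $\lambda^2\ge0$, we get $-(\beta\lambda^2/2)\kappa_2^{\rm meso}\le 0$ at once.

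\emph{Step 3: the interpretive claim.} The phrase ``lower the free energy below the bound'' is the only point needing care, because the cumulant series need not converge. I would not rely on truncating the series. Instead I would use the exact curvature identity. Write $K(\lambda)=\ln M_{\rm meso}(\lambda)$, a finite sum of exponentials, hence smooth in $\lambda$. Then $K''(\lambda)=\beta^2\,\mathrm{Var}_\lambda(\Vm)\ge 0$, where the variance is taken in the tilted distribution proportional to $P^{(0)}_{\rm meso}e^{-\beta\lambda\Vm}$. Consequently $\Fm(\lambda)=\Fm^{(0)}-\kT K(\lambda)$ is concave in $\lambda$, with $\Fm''(0)=-\beta\kappa_2^{\rm meso}\le0$. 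By Taylor's theorem with integral remainder,
\[
\Fm(\lambda)=\Fm^{(0)}+\lambda\kappa_1^{\rm meso}-\beta\int_0^\lambda(\lambda-s)\,\mathrm{Var}_s(\Vm)\dd s .
\]
This exhibits the gap below the Gibbs--Bogoliubov line as a non-positive, fluctuation-driven correction whose leading small-$\lambda$ part is exactly $-(\beta\lambda^2/2)\kappa_2^{\rm meso}$. The main, if mild, obstacle is this step: making ``fluctuations lower the free energy'' precise without assuming convergence of \eqref{eq:Fm_cumulant}. The integral-remainder form resolves it.
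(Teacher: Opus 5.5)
Your proof is correct. Steps 1 and 2 are exactly the reasoning the paper relies on. The corollary gets no separate proof in the paper: it is read off from \eqref{eq:kappa2_meso}, which is a variance under the probability distribution $P_{\rm meso}^{(0)}$, together with $\beta>0$. The clause ``below the Gibbs--Bogoliubov bound'' is supported only by placing the truncated series \eqref{eq:Fm_cumulant} next to \cref{thm:meso_GB}.

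Step 3 is where you go beyond the paper. The paper's reading is heuristic. At finite $\lambda$ the gap $\Fm(\lambda)-\Fm^{(0)}-\lambda\kappa_1^{\rm meso}$ also contains $+\beta^2\lambda^3\kappa_3^{\rm meso}/6-\cdots$, whose signs are uncontrolled, and the series need not converge at $\lambda=1$. Your identity $\Fm''(\lambda)=-\beta\,\mathrm{Var}_\lambda(\Vm)$, combined with the integral remainder, is exact for every real $\lambda$. It gives a proof of \eqref{eq:meso_GB} independent of the Jensen argument. It also shows the gap is strictly negative for $\lambda\neq0$ precisely when $\kappa_2^{\rm meso}>0$, because the tilted distributions have the same support as $P_{\rm meso}^{(0)}$. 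So your route proves the interpretive clause, whereas the paper only illustrates it at second order.

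One small correction to your justification: $\NN$ is not finite. Momentum cells $\Pi_\alpha$ of finite volume covering $\R^3$ must be infinitely many. Nothing breaks, for the following reasons.
\begin{itemize}
  \item Since $v$ depends only on positions, the momentum integrals in \eqref{eq:v_bar} cancel against $|\Pi_\alpha|\,|\Pi_\beta|$. Hence $\bar v_{(i,\alpha)(j,\beta)}$ depends only on the finitely many spatial labels.
  \item Therefore $|\Vm|\le\tfrac12 N^2\max|\bar v|$, and $\Vm$ takes only finitely many values on $\NN$.
  \item $P_{\rm meso}^{(0)}$ still sums to one, by the multinomial theorem for a convergent series of non-negative terms, and the variance is finite.
  \item After grouping configurations by the value of $\Vm$, $M_{\rm meso}$ is a finite positive combination of exponentials. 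Your smoothness and differentiation steps therefore go through as written.
\end{itemize}
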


\subsection{Mesoscopic Coupling-Parameter Integration}
\label{sub:meso_KI}

\begin{theorem}[Mesoscopic Hellmann--Feynman and coupling integration]
\label{thm:meso_KI}
  The derivative of the mesoscopic free energy with respect to the coupling parameter is
  \begin{equation}\label{eq:dFm_dlambda}
    \frac{\dd\Fm(\lambda)}{\dd\lambda} = \avg{\Vm}_{\rm meso,\lambda},
  \end{equation}
  where $\avg{\cdot}_{\rm meso,\lambda}$ denotes the average with respect to the
  mesoscopic distribution at coupling $\lambda$.  Integrating gives the exact result
  \begin{equation}\label{eq:meso_KI}
    \Fm(1) - \Fm^{(0)} = \int_0^1\avg{\Vm}_{\rm meso,\lambda}\,\dd\lambda.
  \end{equation}
\end{theorem}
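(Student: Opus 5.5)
The plan is to differentiate $\Fm(\lambda)=-\kT\ln\Zm(\lambda)$ directly, using the fact that $\Zm(\lambda)$ in \eqref{eq:Zm_def} is a \emph{finite} sum over $\NN$. The set $\NN$ has $\binom{N+|\mathcal{I}||\mathcal{A}|-1}{N}$ elements, which is finite provided the momentum partition is finite or we truncate to finitely many cells carrying the weight. Each summand $\mathcal{W}(\{n_{i,\alpha}\})e^{-\beta\Hm^{(0)}}e^{-\beta\lambda\Vm}$ is an entire function of $\lambda$, because $\Vm$ is a fixed real number for each configuration. Here I use that the $\bar v_{(i,\alpha)(j,\beta)}$ are finite by \cref{def:eps}, and that $\bar\varepsilon^{(0)}_{i,\alpha}$ is finite. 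So term-by-term differentiation is trivially legitimate.

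Moreover $\Zm(\lambda)>0$, since every $\mathcal{W}>0$ and all exponentials are positive. Hence $\ln\Zm$ is smooth in $\lambda$.

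\textbf{Step 1.} Compute
\begin{equation*}
\frac{\dd\Zm}{\dd\lambda}=-\beta\sum_{\NN}\mathcal{W}\,e^{-\beta\Hm(\cdot,\lambda)}\,\Vm .
\end{equation*}
This uses $\partial_\lambda\Hm=\Vm$, which follows from \eqref{eq:Hm_def}.

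\textbf{Step 2.} Define the mesoscopic Gibbs distribution at coupling $\lambda$,
\begin{equation*}
P_{\rm meso,\lambda}(\{n\})=\mathcal{W}(\{n\})\,e^{-\beta\Hm(\{n\},\lambda)}/\Zm(\lambda).
\end{equation*}
This makes precise the average $\avg{\cdot}_{\rm meso,\lambda}$ in the statement, and it reduces to \eqref{eq:ref_multinomial} at $\lambda=0$ by \cref{prop:Zm0_fact} and \cref{cor:Fm0}. Then
\begin{equation*}
\frac{\dd\Fm}{\dd\lambda}=-\kT\,\frac{\Zm'(\lambda)}{\Zm(\lambda)}=\avg{\Vm}_{\rm meso,\lambda},
\end{equation*}
which is \eqref{eq:dFm_dlambda}. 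Equivalently, one can write $\Fm(\lambda)=\Fm^{(0)}-\kT\ln M_{\rm meso}(\lambda)$ via \eqref{eq:Fm_logM}, and differentiate $M_{\rm meso}$ in \eqref{eq:Mm}. The reweighting $P^{(0)}_{\rm meso}e^{-\beta\lambda\Vm}/M_{\rm meso}=P_{\rm meso,\lambda}$ then gives the same result. I would note this as a consistency check.

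\textbf{Step 3.} Since $\lambda\mapsto\avg{\Vm}_{\rm meso,\lambda}$ is continuous, being a ratio of finite sums of entire functions with a nonvanishing denominator, the fundamental theorem of calculus on $[0,1]$ yields \eqref{eq:meso_KI}, with $\Fm(0)=\Fm^{(0)}$.

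\textbf{Main obstacle.} The only genuine obstacle is finiteness of the configuration sum. If the momentum cells $\{\Pi_\alpha\}$ are countably infinite, then $\NN$ is countable rather than finite. In that case I would justify the interchange of $\dd/\dd\lambda$ and the sum by dominated convergence on compact $\lambda$-intervals. I would bound $|\Vm|\le\tfrac12 N^2\sup|\bar v|$, using boundedness of the cell-averaged pair interaction on cells of diameter $\ell\gg\xi$ under temperedness and stability. This gives the domination
\begin{equation*}
\mathcal{W}e^{-\beta\Hm^{(0)}}e^{\beta|\lambda|\,N^2\sup|\bar v|/2},
\end{equation*}
which is summable because its sum is a constant multiple of $(Z_1^{(0)})^N<\infty$ by \cref{prop:Zm0_fact}. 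Here $Z_1^{(0)}<\infty$ follows from Gaussian decay of $e^{-\beta|p|^2/2m}$ in the cell averages. With that domination in place, the rest is the routine computation above. A simpler alternative is to differentiate $\ln M_{\rm meso}$ as a power series using the cumulant expansion, which gives the same derivative term by term.
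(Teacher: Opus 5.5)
Your proposal is correct and follows the paper's route: differentiate $\Fm=-\kT\ln\Zm$, use $\partial_\lambda\Hm=\Vm$ to recognise the Gibbs average at coupling $\lambda$, and integrate over $[0,1]$. Your justification of the derivative--sum interchange for a countable momentum partition, by domination with a multiple of $(Z_1^{(0)})^N$, fills in a point the paper skips; the one thing to drop is the closing ``power-series alternative'', since the cumulant series for $\ln M_{\rm meso}$ need not converge all the way to $\lambda=1$.
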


\begin{proof}
  Differentiate $\Fm(\lambda) = -\kT\ln\Zm(\lambda)$:
  \begin{equation*}
    \frac{\dd\Fm}{\dd\lambda}
    = -\kT\frac{1}{\Zm}\frac{\dd\Zm}{\dd\lambda}
    = -\kT\cdot\frac{-\beta\sum_{\{n\}}\mathcal{W}\,\Vm\,e^{-\beta\Hm}}{\Zm}
    = \avg{\Vm}_{\rm meso,\lambda}.
  \end{equation*}
  Integrating over $\lambda\in[0,1]$ gives \eqref{eq:meso_KI}.
\end{proof}

\begin{remark}
  Equation \eqref{eq:meso_KI} is the \emph{mesoscopic analogue of the
  Kirkwood coupling-parameter integration}~\cite{Kirkwood1935}: the exact free energy
  difference equals the integral of the thermally averaged inter-cell perturbation energy
  over the full coupling range.  Taylor-expanding $\avg{\Vm}_{\rm meso,\lambda}$ in
  $\lambda$ and integrating term-by-term reproduces the cumulant
  series~\eqref{eq:Fm_cumulant}.  This identity holds at every scale set by the
  coarse-graining partition and requires no approximation.
\end{remark}

\section{Connection between the Mesoscopic and Full Free Energies}
\label{sec:connection}

The mesoscopic partition function $\Zm(\lambda)$ approximates the full partition function
$Z(\lambda)$ with corrections controlled by the inter-cell mutual information of the
extensivity analysis~\cite{OsanoExtensivity}.

\subsection{The Connection Formula}\label{sub:connection_formula}

\begin{theorem}[Mesoscopic--full connection]\label{thm:connection}
  Under the conditions of~\cite{OsanoExtensivity} (stability, temperedness, exponential
  cluster decomposition with length $\xi$, scale separation $\ell\gg\xi$), the full
  Helmholtz free energy satisfies
  \begin{equation}\label{eq:connection}
    F(\lambda)
    = \Fm(\lambda)
      - \kT\!\sum_{i<j}I(i,j;\lambda)
      + O\!\left(\frac{|\Lam|}{\ell^d}\,e^{-2\ell/\xi}\right),
  \end{equation}
  where $I(i,j;\lambda)\ge 0$ is the mutual information between spatial cells $i$ and $j$
  at coupling $\lambda$:
  \begin{equation}\label{eq:MI_def}
    I(i,j;\lambda)
    = \sum_{\alpha,\beta}
      \pi_{(i,\alpha)(j,\beta)}(\lambda)
      \ln\frac{\pi_{(i,\alpha)(j,\beta)}(\lambda)}{\pi_{i,\alpha}(\lambda)\,\pi_{j,\beta}(\lambda)}.
  \end{equation}
\end{theorem}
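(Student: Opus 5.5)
I will compare the two free energies at the level of the \emph{probability distributions} they generate on the coarse-grained cells, rather than by expanding either partition function directly. The first step is to write $Z(\lambda)$ exactly as a sum over occupation configurations. I partition $\GN$ according to which product cell $C_{i,\alpha}$ each particle occupies, which gives
\begin{equation*}
  Z(\lambda)=\sum_{\{n_{i,\alpha}\}\in\NN}\frac{N!}{\prod n_{i,\alpha}!}\int_{\prod C}e^{-\beta H_N}.
\end{equation*}
Each cell integral then equals $\prod|C_{i,\alpha}|^{n_{i,\alpha}}$ times the cell average of $e^{-\beta H_N}$. This matches the structure of \eqref{eq:Zm_def}: the weight $\mathcal{W}$ is identical, and only the Boltzmann factor differs. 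The task is therefore to compare $\langle e^{-\beta H_N}\rangle_{\rm cell}$ with $e^{-\beta\Hm}$.

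The second step turns this ratio into an entropy difference. I will use the Gibbs variational principle twice. The full $F$ minimises $\langle H_N\rangle-TS$ over microscopic densities. $\Fm$ minimises the same functional over densities that are constant on product cells and whose pair statistics factorise, i.e.\ densities built from the one-cell marginals $\pi_{i,\alpha}$. Evaluating both functionals at the respective optimisers, the energy terms agree up to intra-cell contributions. Those intra-cell pieces are absorbed into $\bar\varepsilon^{(0)}$ by the convention of the remark following \cref{def:Hm}. The residual energy mismatch from pairs in distinct but nearby cells is controlled by temperedness.

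What remains is the entropy difference between the true coarse-grained joint law and the product of its marginals. I will expand this difference via the standard subadditivity decomposition. Its leading part is $\sum_{i<j}I(i,j;\lambda)$, with $I(i,j;\lambda)$ as in \eqref{eq:MI_def}. Multiplying by $-T$ yields the term $-\kT\sum_{i<j}I(i,j;\lambda)$. The sign is negative because the product measure overcounts entropy relative to the correlated measure. Nonnegativity of $I$ follows from Gibbs' inequality.

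The third step bounds the remainder using the extensivity theorem of~\cite{OsanoExtensivity}. Higher-order multi-information terms involving three or more spatial cells, together with the energy mismatch, are the error. By the cluster decomposition, the factorisation corrections in \eqref{eq:factorisation} are $O(N^{-1}e^{-|x_i-x_j|/\xi})$. Each pairwise mutual information is quadratic in these deviations, since the relative entropy is second order near the product measure, and each genuine three-cell term carries at least two decay factors. Summing over neighbouring cells therefore gives $O(e^{-2\ell/\xi})$ per cell, and multiplying by the $|\Lam|/\ell^d$ cells gives the stated error.

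I expect the main obstacle to be the second step. Identifying $\Fm$ with a variational problem over factorised cell-constant densities is not automatic, because $\Hm$ uses cell-averaged energies rather than averages of $e^{-\beta H}$. Jensen's inequality produces only an inequality between these, and the Jensen gap is of intra-cell size. I would first try to show that this gap is absorbed by the reference convention, or that it vanishes in the fine-momentum-cell limit.
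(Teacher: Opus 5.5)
Your second step does not go through, and the obstacle you flag at the end is the central problem, not a technicality.

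\textbf{The variational comparison.} $\Fm$ is not the minimum of the Gibbs functional over factorised cell-constant densities. For $\lambda\neq0$ the mesoscopic Gibbs law $\mathcal{W}e^{-\beta\Hm}/\Zm$ is in general not multinomial, because $\Vm$ is bilinear in the occupations and couples distinct cells. In fact $\Fm(\lambda)$ is the minimum of $\langle\Hm\rangle-TS$ over \emph{all} cell-constant densities on $\GN$. Restricting to products of one-cell marginals only gives a mean-field upper bound on $\Fm$.

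Moreover, the true and mesoscopic optimisers are different states, with different energies and generically different one-cell marginals. So evaluating each functional at its own optimiser does not cancel the energy terms and leave the entropy deficit of the true coarse-grained law relative to its own marginals.

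What the variational principle does give exactly, in the favourable case where $\Hm$ equals the cell average of $H_N$ on each cell of $\GN$, is
\[
  F(\lambda)=\Fm(\lambda)-\kT\,D\bigl(\varrho_{\rm meso}\,\big\|\,\varrho_{N,\lambda}\bigr),
\]
where $\varrho_{\rm meso}$ is the cell-constant density on $\GN$ equal to $e^{-\beta\Hm}/\Zm$. This relative entropy contains a within-cell Jensen part, measuring how far $\varrho_{N,\lambda}$ is from constant on each cell. That part is of order $N$ and is not controlled by $\xi$. So $F-\Fm$ is not, even to leading order, an inter-cell information quantity.

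\textbf{Why the proposed remedies fail.} \cref{def:eps} fixes $\bar\varepsilon^{(0)}_{i,\alpha}$ as the cell average of $|p|^2/2m+u_0$, so nothing can be absorbed into it. $\Hm$ also contains neither the reference pair potential nor $v$ between two particles in the same product cell. A fine-momentum-cell limit is not among the hypotheses, and it would not restore those missing pair terms.

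\textbf{A counterexample.} Take an ideal gas with $u_0\equiv0$ at $\lambda=0$. The state is a product over particles, so $I(i,j;0)$ vanishes up to the $O(N^{-1})$ fixed-$N$ artefact, and $\xi$ may be taken arbitrarily small. Yet, by Jensen, for momentum cells of positive size,
\[
  \ln Z(0)-\ln\Zm(0)=N\ln\frac{\sum_{i,\alpha}\int_{C_{i,\alpha}}e^{-\beta|p|^2/2m}\dd z}{Z_1^{(0)}}>0 .
\]
With a hard-sphere reference, $\Zm(0)$ is this same ideal-gas quantity. Then $F(0)-\Fm(0)$ additionally contains the entire positive, extensive hard-sphere excess free energy. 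The second step therefore cannot be closed under the stated hypotheses.

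\textbf{Two further points.}
\begin{enumerate}
  \item Your third step makes $I(i,j;\lambda)$ quadratic in an $e^{-|x_i-x_j|/\xi}$ deviation. That puts $\sum_{i<j}I$ over neighbouring cells at $O(|\Lam|\ell^{-d}e^{-2\ell/\xi})$, the same size as the remainder it is supposed to be separated from.
  \item Gibbs' inequality does not give $I(i,j;\lambda)\ge0$ for a single pair of spatial cells, because $\sum_{\alpha,\beta}\pi_{(i,\alpha)(j,\beta)}\neq\sum_{\alpha,\beta}\pi_{i,\alpha}\pi_{j,\beta}$ in general. With Maxwellian momenta the block reduces to $P_{ij}\ln\bigl(P_{ij}/(P_iP_j)\bigr)$, where $P_{ij}:=\sum_{\alpha,\beta}\pi_{(i,\alpha)(j,\beta)}$ and $P_i:=\sum_\alpha\pi_{i,\alpha}$. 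This is negative whenever $P_{ij}<P_iP_j$.
\end{enumerate}

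\textbf{Comparison with the paper.} The paper does not argue variationally. It asserts a multi-information expansion of $\ln Z-\ln\Zm$ by analogy with the entropy decomposition of~\cite{OsanoExtensivity}, identifies the pair term with $I(i,j;\lambda)$, and bounds the $s\ge3$ terms. That asserted expansion presupposes exactly that $\ln Z-\ln\Zm$ has no intra-cell part, so it does not supply your missing step either. Your route at least makes the offending term visible.
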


\begin{proof}
  By definition, $\ln Z(\lambda) = -\beta F(\lambda)$ and
  $\ln\Zm(\lambda) = -\beta\Fm(\lambda)$.  The difference
  $\ln Z(\lambda) - \ln\Zm(\lambda)$ equals the correction to the log-partition function
  from inter-cell correlations.

  The full log-partition function can be expressed via the multi-information expansion
  (proved in~\cite{OsanoExtensivity}, \S4.2 for the entropy; the analogous result for
  $\ln Z$ follows from the same cluster-expansion argument):
  \begin{equation}\label{eq:lnZ_chain}
    \ln Z(\lambda) = \ln\Zm(\lambda) + \sum_{s=2}^\infty (-1)^s
    \sum_{i_1<\cdots<i_s}\kappa_s^{\rm MI}(i_1,\dots,i_s;\lambda),
  \end{equation}
  where $\kappa_s^{\rm MI}$ is the $s$-cell connected mutual information.  At the pair
  level, $\kappa_2^{\rm MI}(i,j;\lambda) = -I(i,j;\lambda)$ (mutual information is the
  correction from the factorised approximation).  Under exponential cluster decomposition,
  $I(i,j;\lambda)\le C e^{-|x_i-x_j|/\xi}$ (proved in~\cite{OsanoExtensivity}), and the sum over all $s\ge 3$ is bounded by
  $O(|\Lam|\ell^{-d}e^{-2\ell/\xi})$ (the $s=3$ term involves products of two
  exponentially small factors).  Hence
  \begin{equation}
    \ln Z(\lambda) = \ln\Zm(\lambda) + \sum_{i<j}I(i,j;\lambda)
    + O\!\left(\frac{|\Lam|}{\ell^d}\,e^{-2\ell/\xi}\right).
  \end{equation}
  Multiplying by $-\kT$ gives \eqref{eq:connection}.
\end{proof}

\subsection{Interpretation}
\label{sub:interpretation}

Equation~\eqref{eq:connection} has four immediate consequences:

\begin{enumerate}[label=(\roman*)]
  \item \textbf{Mesoscopic free energy as leading approximation.}
    For $\ell\gg\xi$ (scale separation~\eqref{eq:scale_sep}), all inter-cell mutual
    informations satisfy $I(i,j;\lambda)\le Ce^{-|x_i-x_j|/\xi}$, and the total
    correction $-\kT\sum_{i<j}I(i,j;\lambda) = O(|\Lam|\ell^{-d}e^{-\ell/\xi})$ is
    exponentially suppressed.  Thus $F(\lambda)\approx\Fm(\lambda)$ with exponentially
    small error.

  \item \textbf{Mutual information as the correction.}
    The inter-cell mutual information $I(i,j;\lambda)$ is precisely the object whose vanishing is equivalent to extensivity in~\cite{OsanoExtensivity}. Thus, the same
    quantity that controls entropy additivity also controls the accuracy of the mesoscopic partition function as an approximation to the full one.

  \item \textbf{Sign.}
    Since $I(i,j;\lambda)\ge 0$, the correction $-\kT\sum_{i<j}I(i,j;\lambda)\le 0$: the full free energy is always \emph{lower than or equal to} the mesoscopic free
    energy.  Statistical dependence between cells lowers the free energy, as expected from the second-order Gibbs--Bogoliubov argument.

  \item \textbf{Long-range interactions.}
    For non-tempered potentials ($|\phi(r)|\sim r^{-s}$, $s\le 3$), the mutual information
    $I(i,j;\lambda)\sim|x_i-x_j|^{-s}$ does not decay exponentially, the correction
    $\sum_{i<j}I(i,j)$ diverges, and the mesoscopic free energy is \emph{not} a good
    approximation to the full one.  One must then either include the mutual-information
    corrections explicitly or use a larger cell size to absorb the long-range effects.
\end{enumerate}

\section{First-Order Mesoscopic Perturbation Theory}
\label{sec:first}
We can now develop the perturbation theory.
\subsection{The Mesoscopic First-Order Free Energy}
\label{sub:first_Fm}

At first order in $\lambda$, \eqref{eq:Fm_cumulant} gives
\begin{equation}\label{eq:Fm1}
  \Fm^{(0+1)}(\lambda)
  = \Fm^{(0)} + \lambda\kappa_1^{\rm meso}
  = \Fm^{(0)} + \lambda\avgm{\Vm}.
\end{equation}
Substituting \eqref{eq:kappa1_meso}:
\begin{equation}\label{eq:Fm1_explicit}
  \Fm^{(0+1)}
  = \Fm^{(0)}
    + \frac{\lambda N(N-1)}{2}
      \sum_{\substack{(i,\alpha)\ne(j,\beta)}}
      \pip_{i,\alpha}\,\pip_{j,\beta}\,\bar{v}_{(i,\alpha)(j,\beta)}.
\end{equation}
This is the \emph{mesoscopic mean-field free energy}: the correction is the
occupation-probability-weighted average of the cell-averaged pair interaction, summed
over all distinct inter-cell pairs.

\subsection{Continuous Limit}
\label{sub:first_cont_limit}

As the cell diameter $\ell\to 0$, the cell averages become exact:
$\bar{v}_{(i,\alpha)(j,\beta)}\to v(|\bm{x}_i-\bm{x}_j|)$ and
$\pip_{i,\alpha}\to \fone_0(\bm{x}_i,\bm{p}_\alpha)\,|C_{i,\alpha}|/N$.

\begin{proposition}[Recovery of standard first-order theory]
\label{prop:first_limit}
  In the limit of infinitesimally small cells with scale separation maintained,
  \begin{equation}\label{eq:Fm1_limit}
    \lim_{\ell\to 0}\Fm^{(0+1)}(1)
    = F_0 + \frac{N\rho}{2}\int_0^\infty v(r)\,g_0(r)\,4\pi r^2\,\dd r
    = F_0 + \avgo{V},
  \end{equation}
  where $g_0(r)$ is the pair distribution function of the reference system at $\lambda=0$.
\end{proposition}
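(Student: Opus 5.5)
The plan is to treat the two terms of \eqref{eq:Fm1_explicit} separately at $\lambda=1$. First I show that $\Fm^{(0)}\to F_0$ as the cells shrink. Then I show that the double sum becomes the usual first-order average $\avgo{V}$, and rewrite that average in the radial $g_0$ form.

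\textbf{Reference term.} By \cref{cor:Fm0}, $\Fm^{(0)}=-N\kT\ln Z_1^{(0)}$, and $Z_1^{(0)}=\sum_{i,\alpha}|C_{i,\alpha}|e^{-\beta\bar\varepsilon^{(0)}_{i,\alpha}}$ is a Riemann sum for $\int_\Ms e^{-\beta\varepsilon_0(z)}\dd z$. Since $\varepsilon_0$ is continuous on each cell (or at least integrable, with $e^{-\beta\varepsilon_0}$ bounded), the Riemann sum converges as the cells shrink. The momentum cells also have to shrink, and the unbounded momentum region is handled by dominated convergence using the Gaussian tail. So $\Fm^{(0)}\to -N\kT\ln\int e^{-\beta\varepsilon_0}$, which is $F_0$ for a reference that is a sum of one-body terms. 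If $H_0$ also contains the short-range repulsion absorbed into $\bar\varepsilon^{(0)}$ (per the Remark after \cref{def:Hm}), I interpret $F_0$ as the free energy of that reference and take this identification as the meaning of the limit.

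\textbf{Perturbation term.} I substitute $\pip_{i,\alpha}\approx \fone_0(x_i,p_\alpha)|C_{i,\alpha}|/N$ and $\bar v\to v(|x_i-x_j|)$. The sum $\sum_\alpha$ over momentum cells then integrates out $p$ and leaves the spatial density $\rho_0(q)$, which equals $\rho=N/V$ for a homogeneous reference. The double sum $\frac{N(N-1)}{2}\sum\pip\pip\bar v$ becomes $\frac{N(N-1)}{2N^2}\iint\rho_0(q_1)\rho_0(q_2)v(|q_1-q_2|)\dd q_1\dd q_2$. The excluded diagonal terms $(i,\alpha)=(j,\beta)$ have total weight $O(\sum\pip_{i,\alpha}{}^2)\to0$, provided $v$ is locally integrable or cut off inside the hard core. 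Passing to relative coordinates gives $\frac{N\rho}{2}\int v(r)\,4\pi r^2\dd r\cdot(1-1/N)$.

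\textbf{Main obstacle.} The resulting integral carries no factor $g_0$. The multinomial reference treats particles as independent, so the limit naturally yields the mean-field (van der Waals) form with $g_0\equiv1$ rather than $\avgo{V}$. To reach \eqref{eq:Fm1_limit} I would argue that the reference correlations are exactly those the Remark after \cref{def:Hm} absorbs into the reference: inter-cell pairs closer than the core are suppressed. Concretely, I would replace $\pip_{i,\alpha}\pip_{j,\beta}$ by the two-cell probability $\pi_{(i,\alpha)(j,\beta)}(0)$ built from $\ftwo_0$. The difference between the two is controlled by the Ursell term $\utwo_0$ of \eqref{eq:ursell}, and by \eqref{eq:factorisation} it decays like $e^{-|x_i-x_j|/\xi}$. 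Integrating with $\ftwo_0=\rho^2g_0$ then gives $\frac{N\rho}{2}\int v g_0 4\pi r^2\dd r$. I expect this step to be the real difficulty. The scale separation $\ell\gg\xi$ is in tension with $\ell\to0$, so the limit must be taken as $\ell\to0$ while retaining the pair factor $g_0$, which the proof would have to make precise. The final equality with $\avgo{V}$ is the standard identity $\avgo{\sum_{k<l}v}=\frac12\iint\ftwo_0 v$.
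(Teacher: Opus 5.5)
Your first two steps follow the paper's proof. The paper likewise passes the double sum in \eqref{eq:Fm1_explicit} to $\frac{N(N-1)}{2N^2}\iint\fone_0(z_1)\fone_0(z_2)\,v(|q_1-q_2|)\,\dd z_1\dd z_2$ and simply asserts $\Fm^{(0)}\to F_0$. Your diagnosis that this limit carries no $g_0$ is correct. The gap is the proposed repair, which would fail.

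By definition, $\kappa_1^{\rm meso}$ is an average under the multinomial law \eqref{eq:ref_multinomial}, for which $\avgm{n_{i,\alpha}n_{j,\beta}}=N(N-1)\pip_{i,\alpha}\pip_{j,\beta}$ holds exactly. Replacing $\pip_{i,\alpha}\pip_{j,\beta}$ by a two-cell probability built from $\ftwo_0$ therefore does not approximate $\Fm^{(0+1)}$. It swaps it for a different object, a cell discretisation of $\avgo{V}$. Up to $O(1)$, the two differ in the limit by $\tfrac12\iint\utwo_0(z_1,z_2)\,v(|q_1-q_2|)\,\dd z_1\dd z_2$. For a homogeneous reference this is $\frac{N\rho}{2}\int_0^\infty v(r)\,[g_0(r)-1]\,4\pi r^2\dd r$, an $O(N)$ quantity of the same order as the term you keep.

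The factorisation estimate \eqref{eq:factorisation} cannot make this small. It gives smallness only for $|x_i-x_j|\gg\xi$, where $g_0\approx 1$ anyway. By contrast, $g_0-1$ is $O(1)$ precisely at $r\lesssim\xi$ (it equals $-1$ inside a hard core), and that region contributes a fixed, non-vanishing part of the Riemann sum as $\ell\to0$. Your remark about scale separation points the same way: taken literally, $\xi\ll\ell\to0$ forces $\xi\to0$, hence $g_0(r)\to1$ for every fixed $r>0$.

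The paper does not supply the missing idea either. Its last line invokes $\ftwo_0\approx\fone_0\fone_0$ (\emph{the Ursell correction is of higher order}), which is the assertion $g_0\equiv1$. Because the reference mesoscopic measure is a product (multinomial) measure, $\kappa_1^{\rm meso}$ only ever sees one-body data. No argument starting from $\Zm$ as defined can produce a nontrivial $g_0$.

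What your computation and the paper's actually establish is \eqref{eq:Fm1_limit} with $g_0$ replaced by $1$, the mean-field form. This agrees with $F_0+\avgo{V}$ only in two cases. One is an uncorrelated reference (the ideal gas of \cref{sub:ref_IG}). The other is a perturbation vanishing inside the core, treated to leading order in density, where $g_0-1=O(\rho)$ on the support of $v$ (the approximation used in \cref{sub:vdW_meso}).

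Your caveat on the reference term is a second gap of the same kind, also shared with the paper. $\Fm^{(0)}=-N\kT\ln Z_1^{(0)}$ tends to $-N\kT\ln\int_\Ms e^{-\beta\varepsilon_0}\dd z$, the free energy of non-interacting particles in $u_0$. Since \eqref{eq:eps_bar} contains no pair term, for a hard-sphere or WCA reference this limit is not the interacting $F_0$. Stipulating the identification does not prove it.
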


\begin{proof}
  In the continuous limit, the double sum over cell pairs in \eqref{eq:Fm1_explicit}
  converges to a double integral:
  \begin{align}
    \frac{N(N-1)}{2}\sum_{(i,\alpha)\ne(j,\beta)}
    \pip_{i,\alpha}\pip_{j,\beta}\bar{v}_{(i,\alpha)(j,\beta)}
    &\to \frac{N(N-1)}{2N^2}
    \int_\Ms\int_\Ms \fone_0(z_1)\fone_0(z_2)\,v(|q_1-q_2|)\dd z_1\dd z_2 \notag\\
    &= \frac{1}{2}\int_\Ms\int_\Ms \fone_0(z_1)\fone_0(z_2)\,v(|q_1-q_2|)\dd z_1\dd z_2
       + O(N^{-1}) \notag\\
    &= \frac{N\rho}{2}\int_0^\infty v(r)\,g_0(r)\,4\pi r^2\dd r + O(N^{-1}),
  \end{align}
  where in the last line we used the definition of $g_0$ via the two-particle reduced density
  $\ftwo_0(z_1,z_2) \approx \fone_0(z_1)\fone_0(z_2)$ (disconnected part only at the
  first-order level, since the Ursell correction is of higher order) and integrated out
  the momenta.  Adding $\Fm^{(0)}\to F_0$ completes the proof.
\end{proof}

\subsection{Van der Waals from the Mesoscopic Perspective}
\label{sub:vdW_meso}

Take the reference system to be hard spheres of diameter $\sigma$ and the perturbation
to be a purely attractive tail $v_{\rm attr}(r)$ for $r\ge\sigma$.  The hard-sphere $x{\rm HS}$ is determined by the cell-averaged kinetic energy
and the hard-sphere configurational weight.  At the coarse-grained level, cells separated
by more than one diameter $\sigma$ have $\bar v_{(i,\alpha)(j,\beta)} = v_{\rm attr}
(|x_i-x_j|)$, while cells within one diameter contribute only to the reference $H_0^{\rm HS}$.

At the mesoscopic first order, \eqref{eq:Fm1_explicit} becomes
\begin{equation}\label{eq:vdW_meso}
  \Fm^{(0+1)}\big|_{\lambda=1}
  = F_{\rm HS} + \frac{N(N-1)}{2}
    \sum_{\substack{(i,\alpha),(j,\beta):\\|x_i-x_j|>\sigma}}{\pip_{i,\alpha}}^{\rm HS}{\pip_{j,\beta}}^{\rm HS}\,v_{\rm attr}(|x_i-x_j|).
\end{equation}
In the continuous limit, using $g_{\rm HS}(r)\approx 1$ for $r>\sigma$ at low density,
this gives $\Fm^{(0+1)}\to F_{\rm HS} - Na\rho$ with the van der Waals constant
$a = -(1/2)\int_\sigma^\infty v_{\rm attr}(r)\,4\pi r^2\dd r$, reproducing the van der
Waals equation of state directly from the mesoscopic first-order theory.

\section{Second-Order Mesoscopic Perturbation Theory}
\label{sec:second}
We aim to develop a perturbation theory that may, in future work, be extended to gravitational settings, such as those considered in \cite{Osano2017}. In the present work, however, we restrict attention to non-relativistic perturbations. 
\subsection{The Mesoscopic Second-Order Correction}
\label{sub:second_Fm}

The second-order mesoscopic correction is
\begin{equation}\label{eq:Fm2_def}
  \Fm^{(2)} = -\frac{\beta}{2}\kappa_2^{\rm meso}
             = -\frac{\beta}{2}\avgm{(\Vm-\avgm{\Vm})^2}.
\end{equation}
We compute $\kappa_2^{\rm meso}$ using the multinomial statistics of
$\{n_{i,\alpha}\}$ under $P_{\rm meso}^{(0)}$.

\begin{proposition}[Decomposition of the mesoscopic variance]
\label{prop:kappa2_meso}
  The mesoscopic second cumulant decomposes as
  \begin{align}
    \kappa_2^{\rm meso}
    &= \underbrace{\frac{N(N-1)}{4}
       \sum_{(i,\alpha)\ne(j,\beta)}\pip_{i,\alpha}\pip_{j,\beta}\,
       \bar{v}_{(i,\alpha)(j,\beta)}^2}_{\text{diagonal pair contribution}}
       \label{eq:kappa2_meso_diag}\\
    &\quad
      + \underbrace{\frac{N(N-1)(N-2)}{4}
        \sum_{\substack{(i,\alpha),(j,\beta),(k,\gamma)\\\text{distinct cells}}}
        \pip_{i,\alpha}\pip_{j,\beta}\pip_{k,\gamma}\,
        \bar{v}_{(i,\alpha)(j,\beta)}\bar{v}_{(i,\alpha)(k,\gamma)}}_{\text{three-cell contribution}}
        \label{eq:kappa2_meso_trip}\\
    &\quad
      - \left(\frac{N(N-1)}{2}
        \sum_{(i,\alpha)\ne(j,\beta)}
        \pip_{i,\alpha}\pip_{j,\beta}\,\bar{v}_{(i,\alpha)(j,\beta)}\right)^2
        \cdot O(N^{-2}),
        \label{eq:kappa2_meso_sub}
  \end{align}
  where the last term is the subtracted mean-squared contribution.
\end{proposition}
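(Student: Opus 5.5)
The plan is to compute $\avgm{\Vm^2}$ directly from the particle representation of the multinomial law \eqref{eq:ref_multinomial} and then subtract $\avgm{\Vm}^2$ from \eqref{eq:kappa1_meso}. Write $a=(i,\alpha)$ for a cell label. Under $P_{\rm meso}^{(0)}$ the occupation numbers can be realised as $n_a=\sum_{k=1}^N\Ind{\{c_k=a\}}$, where $c_1,\dots,c_N$ are i.i.d.\ cell labels with law $\pip_a$. This is the representation that makes the multinomial theorem of \cref{prop:Zm0_fact} transparent. The perturbation then becomes
\[
\Vm=\tfrac12\sum_{a\ne b}n_an_b\bar v_{ab}=\tfrac12\sum_{k\ne l}\bar v_{c_kc_l}\Ind{\{c_k\ne c_l\}}.
\]
It is thus a U-statistic of order two in i.i.d.\ variables, with kernel $w(c,c'):=\bar v_{cc'}\Ind{\{c\ne c'\}}$. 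Taking the mean reproduces \eqref{eq:kappa1_meso}: there are $N(N-1)$ ordered pairs $k\ne l$, each contributing $\sum_{a\ne b}\pip_a\pip_b\bar v_{ab}$.

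Next I would square the expression, $\Vm^2=\frac14\sum_{k\ne l}\sum_{k'\ne l'}w(c_k,c_l)w(c_{k'},c_{l'})$, and split the ordered index quadruples according to the overlap $|\{k,l\}\cap\{k',l'\}|\in\{2,1,0\}$.

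\emph{Full overlap.} There are $2N(N-1)$ such quadruples, each with expectation $\sum_{a\ne b}\pip_a\pip_b\bar v_{ab}^2$. This gives the diagonal term \eqref{eq:kappa2_meso_diag}, up to the combinatorial prefactor, which I will fix carefully; the factor $2$ from $(k',l')=(l,k)$ must be tracked.

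\emph{Single overlap.} There are $4N(N-1)(N-2)$ quadruples, each with expectation $\sum_{a,b,c}\pip_a\pip_b\pip_c w(a,b)w(a,c)$, using the symmetry of $\bar v$ to place the shared index first. The sum here runs over $b\ne a$ and $c\ne a$, but $b=c$ is allowed. I would split it into the part with $a,b,c$ all distinct, which is \eqref{eq:kappa2_meso_trip}, and the part with $b=c$. The latter has the same form as the diagonal sum with one extra factor $\pip$, and I would fold it into the diagonal contribution or the remainder.

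\emph{Disjoint pairs.} There are $N(N-1)(N-2)(N-3)$ quadruples. By independence each gives $\bigl(\sum_{a\ne b}\pip_a\pip_b\bar v_{ab}\bigr)^2$. This nearly cancels against $\avgm{\Vm}^2=\frac14[N(N-1)]^2(\sum\cdots)^2$. Since $N(N-1)(N-2)(N-3)-[N(N-1)]^2=-N(N-1)(4N-6)$, the net remainder is $-\avgm{\Vm}^2\cdot(4N-6)/(N(N-1))$. This is the subtracted mean-squared term \eqref{eq:kappa2_meso_sub}, times an explicit coefficient of relative size $O(N^{-1})$.

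The main obstacle is bookkeeping. The stated prefactors $N(N-1)/4$ and $N(N-1)(N-2)/4$, together with the $O(N^{-2})$ in \eqref{eq:kappa2_meso_sub}, must be reconciled with the exact counts $2N(N-1)/4$ and $4N(N-1)(N-2)/4$ and with the remainder coefficient $(4N-6)/(N(N-1))$. I would first derive the exact identity from the three overlap classes. Then I would read the proposition as a statement about the structure of the decomposition: a pair term, a three-cell term, and a subtracted mean-square correction that is suppressed in $N$. The coefficients I would state are the exact ones, with any discrepancy flagged. In particular, the mean-square remainder is $O(N^{-1})$ relative to $\avgm{\Vm}^2$ rather than $O(N^{-2})$, and the $b=c$ coincident-cell terms have to be absorbed explicitly. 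These pieces combine to show that $\kappa_2^{\rm meso}$ is $O(N^3)$ term-by-term, with the leading $N^3$ parts cancelling in the connected combination. That cancellation is the mechanism behind the extensivity remark.
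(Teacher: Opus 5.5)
Your computation is correct, and it is more careful than the paper's own argument. The paper follows the same overall route: square $\Vm$, average under the reference multinomial, subtract $\avgm{\Vm}^2$. However, it works directly with occupation numbers and quotes only the second moments $\avgm{n_a^2}$ and $\avgm{n_an_b}$. Since $\Vm^2$ is quartic in the $n_a$, what is actually needed are fourth-order (factorial) moments of the multinomial, so the stated prefactors are asserted rather than derived. Your i.i.d.-label representation $n_a=\sum_k\Ind{\{c_k=a\}}$ replaces those moment formulas by the overlap count $2N(N-1)+4N(N-1)(N-2)+N(N-1)(N-2)(N-3)=[N(N-1)]^2$. It gives the exact identity
\begin{align*}
\kappa_2^{\rm meso}
&=\frac{N(N-1)}{2}\sum_{a\ne b}\pip_a\pip_b\,\bar v_{ab}^2
 +N(N-1)(N-2)\sum_{a\ne b}\pip_a\bigl(\pip_b\bigr)^2\bar v_{ab}^2\\
&\quad+N(N-1)(N-2)\sum_{a,b,c\ \mathrm{distinct}}\pip_a\pip_b\pip_c\,\bar v_{ab}\bar v_{ac}
 -\frac{N(N-1)(2N-3)}{2}\Bigl(\sum_{a\ne b}\pip_a\pip_b\,\bar v_{ab}\Bigr)^2 .
\end{align*}
The occupation-number route reaches the same result once factorial moments are used; your bookkeeping is simply cleaner and self-checking.

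The discrepancies you flag are real. The proposition cannot be proved with the prefactors $N(N-1)/4$ and $N(N-1)(N-2)/4$. The coincident-cell ($b=c$) term is missing. The subtracted term has relative size $(4N-6)/(N(N-1))=O(N^{-1})$, not $O(N^{-2})$. A two-cell check with $N=2$ makes this concrete. There $\Vm=\bar v$ with probability $2pq$, so $\kappa_2^{\rm meso}=2pq\bar v^2-4p^2q^2\bar v^2$. Your formula reproduces this. In the stated version, the three-cell term vanishes, the subtracted term is of order $p^2q^2$, and the diagonal term supplies only $pq\bar v^2$. Presenting the exact identity, as you intend, is the right outcome.

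One claim in your closing paragraph is wrong, although it lies outside the proposition: the leading $N^3$ parts do not cancel. Only the $N^4$ parts cancel, namely the disjoint pairs against $\avgm{\Vm}^2$. Set $h(a):=\sum_{b\ne a}\pip_b\bar v_{ab}$. The surviving $N^3$ coefficient is $\sum_a\pip_a h(a)^2-\bigl(\sum_a\pip_a h(a)\bigr)^2\ge 0$, the variance of the first Hoeffding projection of your U-statistic. It vanishes only if $h$ is constant on the support of $\pip$. For two cells with $p\ne q$ one has $\kappa_2^{\rm meso}\sim N^3pq(p-q)^2\bar v^2$.

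Where $\kappa_2^{\rm meso}$ is extensive, this comes from size rather than cancellation. Take a homogeneous reference with a number of spatial cells proportional to $N$ and summable $\bar v$. Then $\pip_a=O(1/N)$ and $h(a)=O(1/N)$, so each term of your identity is separately $O(N)$, and the $N^4$ cancellation is the only one needed.
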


\begin{proof}
  Expand $(\Vm)^2$ by squaring the double sum $\sum_{(i,\alpha)<(j,\beta)}n_{i,\alpha}
  n_{j,\beta}\bar v_{(i,\alpha)(j,\beta)}$ and use the multinomial cumulant formula:
  $\avgm{n_{i,\alpha}^2} = N\pip_{i,\alpha}(1-\pip_{i,\alpha}) + N^2(\pip_{i,\alpha})^2$
  and $\avgm{n_{i,\alpha}n_{j,\beta}} = N(N-1)\pip_{i,\alpha}\pip_{j,\beta}$ for
  $(i,\alpha)\ne(j,\beta)$.  Collecting terms and subtracting $(\avgm{\Vm})^2$
  yields \eqref{eq:kappa2_meso_diag}--\eqref{eq:kappa2_meso_sub}.
\end{proof}

\subsection{Continuous Limit and the Structure Factor}
\label{sub:second_cont_limit}

In the continuous limit, the leading terms of $\kappa_2^{\rm meso}$ converge to their
continuum counterparts.

\begin{proposition}[Recovery of the structure-factor formula]
\label{prop:second_limit}
  As $\ell\to 0$, the diagonal contribution~\eqref{eq:kappa2_meso_diag} converges to
  \begin{equation}\label{eq:kappa2_diag_limit}
    \frac{N\rho}{2}\int_0^\infty v(r)^2\,g_0(r)\,4\pi r^2\dd r,
  \end{equation}
  and the three-cell contribution~\eqref{eq:kappa2_meso_trip} converges to the
  three-body correlation integral, so that the total mesoscopic second-order correction
  converges to the structure-factor formula:
  \begin{equation}\label{eq:Fm2_Sk}
    \Fm^{(2)}\xrightarrow{\ell\to 0}
    -\frac{\beta N\rho}{2}\int_0^\infty\hat{v}(k)^2\,
    \bigl[S_0(k)-1\bigr]\,\frac{4\pi k^2\dd k}{(2\pi)^3}
    -\frac{\beta N\rho}{2}\int_0^\infty v(r)^2\,g_0(r)\,4\pi r^2\dd r
    = F^{(2)},
  \end{equation}
  where $S_0(k) = 1+\rho\int[g_0(r)-1]e^{i\bm{k}\cdot\bm{r}}\dd\bm{r}$ is the reference
  structure factor.
\end{proposition}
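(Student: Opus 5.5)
The plan follows the same route as the proof of \cref{prop:first_limit}. Treat each of the three pieces of \cref{prop:kappa2_meso} separately. Replace cell sums by phase-space integrals using $\pip_{i,\alpha}\to\fone_0(z)|C_{i,\alpha}|/N$ and $\bar v_{(i,\alpha)(j,\beta)}\to v(|q_1-q_2|)$. Then integrate out momenta, and rewrite products of one-particle densities as reduced densities of the reference state.

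\textbf{Diagonal term.} The Riemann-sum limit of \eqref{eq:kappa2_meso_diag} is
\[
\tfrac{N(N-1)}{4N^2}\int_\Ms\int_\Ms\fone_0(z_1)\fone_0(z_2)\,v(|q_1-q_2|)^2\dd z_1\dd z_2 .
\]
Both orderings of each pair are counted, which supplies the missing factor $2$. As in \cref{prop:first_limit}, identify $\fone_0\fone_0$ with $\ftwo_0$ to leading order. Integrating out momenta and using translation invariance then gives $\rho^2 g_0(r)$ and the factor $N/\rho\cdot\rho^2=N\rho$, which yields \eqref{eq:kappa2_diag_limit}.

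\textbf{Three-cell term and subtraction.} For \eqref{eq:kappa2_meso_trip}, the triple product $\pip\pip\pip$ becomes $\fone_0(z_1)\fone_0(z_2)\fone_0(z_3)/N^3$, and $N(N-1)(N-2)$ cancels $N^3$ up to $O(N^{-1})$. The result is an integral of $v(|q_1-q_2|)v(|q_1-q_3|)$ against a three-point density. The subtraction \eqref{eq:kappa2_meso_sub} contributes the square of the first-order integral, i.e.\ a fully disconnected four-point term.

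The key step is to recombine these pieces into connected parts. The three-cell term minus the disconnected part of the subtraction reorganises into $\rho^3\int v(r_{12})v(r_{13})[g^{(3)}_0-g_0g_0]$ together with pair-fluctuation pieces. Passing to Fourier space, the convolution $\int v(r_{12})v(r_{13})$ over positions becomes $\int\hat v(k)^2(\cdots)\dd\bm k/(2\pi)^3$. The density--density correlation function $\rho\int[g_0-1]e^{i\bm k\cdot\bm r}\dd\bm r=S_0(k)-1$ appears as the kernel. Multiplying by $-\beta/2$ as in \eqref{eq:Fm2_def} should assemble \eqref{eq:Fm2_Sk}. Identifying the total with $F^{(2)}=-\tfrac{\beta}{2}\avgo{(V-\avgo V)^2}$ then amounts to checking that the continuum pieces match the standard expansion of the microscopic variance into two-, three- and four-body terms.

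\textbf{Main obstacle.} The hardest part is this recombination. The multinomial reference measure is a product measure, so its genuine three- and four-point functions are exactly factorised. The correlations $g_0-1$ and $g_0^{(3)}$ therefore cannot emerge from the mesoscopic variance alone; they must be imported through the connection theorem \cref{thm:connection} and the factorisation \eqref{eq:factorisation}, with the Ursell corrections $\utwo_0$ reinstated at the cell level.

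I would first try to redo the variance computation with $\avgm{n_{i,\alpha}n_{j,\beta}}$ replaced by $N(N-1)\int_{C_{i,\alpha}}\int_{C_{j,\beta}}\ftwo_0/N^2$. This is a reweighting justified by the inter-cell mutual information being $O(e^{-\ell/\xi})$, and higher moments would be handled similarly. I would then verify that the $O(N^2)$ disconnected pieces cancel exactly against \eqref{eq:kappa2_meso_sub}, leaving an $O(N)$ remainder. The limit $\ell\to0$ must also be reconciled with $\ell\gg\xi$. I would take $\ell\to0$ only in the Riemann-sum sense, using continuity of $v$ away from the core. Cells closer than $\sigma$ are absorbed in the reference, so singular near-diagonal contributions vanish.
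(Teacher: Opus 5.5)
Your route is the paper's own: Riemann sums, then trade products of one-particle densities for $g_0$ and $g_0^{(3)}$, then Parseval. It has a genuine gap at exactly the step your ``main obstacle'' paragraph identifies. The paper makes that substitution without justification; its own displayed limit of the diagonal term is $\frac{N\rho}{4}\int v^2\,4\pi r^2\dd r$, with no $g_0$. You see why the substitution cannot be justified, but the consequence is fatal rather than something to work around.

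$P^{(0)}_{\rm meso}$ is the law of $N$ i.i.d.\ cell labels $c_1,\dots,c_N$ drawn from $\{\pip_{i,\alpha}\}$, and $\Vm=\sum_{k<l}w(c_k,c_l)$ with $w(a,b)=\bar v_{ab}$ for $a\ne b$ and $w(a,a)=0$. So exactly
\[
\kappa_2^{\rm meso}=\tfrac{N(N-1)}{2}\,\mathrm{Var}(w_{12})+N(N-1)(N-2)\,\mathrm{Cov}(w_{12},w_{13}),
\]
and disjoint pairs contribute nothing. As $\ell\to0$ this tends to the variance of $\sum_{k<l}v(q_k-q_l)$ for \emph{independent} particles. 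For a homogeneous (periodic) reference, the conditional mean of $w_{12}$ given $c_1$ becomes constant, so the covariance drops out. Hence $\Fm^{(2)}\to-\frac{\beta N\rho}{4}\int v(r)^2\,4\pi r^2\dd r$ whatever $g_0$ is.

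It follows that ``identify $\fone_0\fone_0$ with $\ftwo_0$ to leading order'' is not an approximation. The discarded $\iint\utwo_0 v^2$ is $O(N)$, the same order as what you keep. Nor can \eqref{eq:Fm2_Sk} be the limit. At $S_0\equiv1$, $g_0\equiv1$ (the ideal-gas reference, where mesoscopic and microscopic agree) it gives $-\frac{\beta N\rho}{2}\int v^2\,4\pi r^2\dd r$, which is twice the exact value.

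Two further steps also fail.

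\textbf{Factor of 2.} The sum over ordered pairs $(i,\alpha)\ne(j,\beta)$ already matches the full ordered double integral. So \eqref{eq:kappa2_meso_diag} as written tends to $\frac{N\rho}{4}\int v^2\,4\pi r^2\dd r$. The value $N\rho/2$ appears only after correcting the coefficients of \cref{prop:kappa2_meso} to $N(N-1)/2$ and $N(N-1)(N-2)$ as above, and even then without $g_0$.

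\textbf{The proposed repair.} You replace $\avgm{n_an_b}$ by $\ftwo_0$-integrals, on the grounds that the mutual information is $O(e^{-\ell/\xi})$. This computes a different object: the microscopic variance, not $\kappa_2^{\rm meso}$. It is also self-defeating. Where that bound holds ($\ell\gg\xi$), all of $g_0-1$ lives inside cells, and the reinstated inter-cell Ursell terms are exponentially small, so they cannot produce $S_0-1$. As $\ell\to0$, the bound is unavailable. Even granting the reweighting, the microscopic variance retains an $O(N)$ connected four-body term after the $O(N^2)$ cancellation. That term is not expressible through $g_0$ and $S_0$, so ``should assemble \eqref{eq:Fm2_Sk}'' needs a closure you never supply.

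What this approach can actually establish is the independent-particle limit above. A structure-factor formula would require a mesoscopic reference measure that carries the reference correlations.
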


\begin{proof}
  As $\ell\to 0$, the cell sums converge to spatial integrals:
  \begin{equation*}
    \frac{N(N-1)}{4}\sum_{(i,\alpha)\ne(j,\beta)}
    \pip_{i,\alpha}\pip_{j,\beta}\bar v^2
    \to \frac{1}{4N^2}\int\!\!\int\fone_0(z_1)\fone_0(z_2)\,v^2(|q_1-q_2|)\dd z_1\dd z_2
    \approx \frac{N\rho}{4}\int v(r)^2\,4\pi r^2\dd r.
  \end{equation*}
  Substituting the pair structure $g_0$ and the three-body structure $g_0^{(3)}$ for the
  respective integrals, and using Parseval's theorem to rewrite the cross-term in Fourier
  space, reproduces \eqref{eq:Fm2_Sk} exactly as in the continuum theory.
\end{proof}

\section{The Linked-Cluster Theorem at the Mesoscopic Level}
\label{sec:linked}

\begin{theorem}[Mesoscopic linked-cluster theorem]\label{thm:meso_linked}
  In the thermodynamic limit, the mesoscopic free energy $\Fm(\lambda)$ receives
  contributions only from \emph{connected} clusters of cells in the mesoscopic Mayer
  expansion.  Explicitly:
  \begin{equation}\label{eq:Fm_linked}
    \Fm(\lambda) = \Fm^{(0)} - \kT\sum_{\text{connected }G}
    \frac{1}{|{\rm Aut}(G)|}
    \sum_{\substack{(i_1,\alpha_1),\dots,(i_n,\alpha_n)\\\text{distinct}}}
    \prod_{(a,b)\in G}
    \bigl(e^{-\beta\lambda\bar v_{(i_a,\alpha_a)(i_b,\alpha_b)}}-1\bigr)
    \prod_k\pip_{i_k,\alpha_k},
  \end{equation}
  where the outer sum is over connected graphs $G$ on $n$ vertices and $|{\rm Aut}(G)|$
  is the automorphism group order.  Disconnected graphs cancel in $\ln\Zm(\lambda)$.
\end{theorem}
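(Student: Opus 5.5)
The plan is to pass from occupation numbers back to labelled particles and then run the standard Mayer cluster expansion, with the reference multinomial playing the role of the ideal gas.

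\textbf{Step 1: rewrite $M_{\rm meso}$ as a sum over labelled assignments.} By the multinomial identity in the proof of \cref{prop:Zm0_fact}, summing $P_{\rm meso}^{(0)}$ over $\NN$ is the same as summing over maps $c:\{1,\dots,N\}\to\{(i,\alpha)\}$, each weighted by $\prod_{k}\pip_{c(k)}$. Under such a map,
\[
\Vm=\sum_{k<l} \bar v_{c(k)c(l)}\Ind{c(k)\ne c(l)} .
\]
Hence \eqref{eq:Mm} becomes
\[
M_{\rm meso}(\lambda)=\sum_{c}\prod_k\pip_{c(k)}\prod_{k<l}\bigl(1+f_{c(k)c(l)}\bigr),
\qquad f_{ab}:=e^{-\beta\lambda\bar v_{ab}}-1 ,
\]
with $f_{aa}:=0$. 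This convention encodes the fact, noted in the remark after \cref{def:Hm}, that intra-cell pairs are absorbed into the reference.

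\textbf{Step 2: Mayer expansion and exponentiation.} Expanding the product over pairs gives a sum over all graphs $\Gamma$ on the $N$ labelled particles. Each graph factorises into its connected components. The sum over $c$ also factorises over components, because the $\pip$ weights are a product measure. For a component on $n$ particles this produces the cluster weight
\[
w_G=\sum_{a_1,\dots,a_n}\prod_{(a,b)\in G}f_{a_a a_b}\prod_k\pip_{a_k}.
\]
Since $f_{aa}=0$, any assignment placing two adjacent vertices in the same cell vanishes. For $G$ a tree, or after neglecting coincident non-adjacent cells, one may therefore restrict to distinct cells as in \eqref{eq:Fm_linked}; this coincidence issue is one source of error I would track.

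Now sum over set partitions of $\{1,\dots,N\}$ into clusters. Grouping clusters by the isomorphism type of $G$ brings in the factor $N!/\bigl(\prod_G m_G!\,(n_G!/|{\rm Aut}(G)|)^{m_G}\cdots\bigr)$. In the large-$N$ regime, $\binom{N}{n}\approx N^n/n!$, and the usual exponential formula then gives
\[
\ln M_{\rm meso}\simeq \sum_{\text{connected }G}\frac{N^{n_G}}{|{\rm Aut}(G)|}\,w_G\Big/ (\text{normalisation}).
\]
Disconnected graphs drop out of the logarithm: a disconnected graph's weight is the product of its components' weights, which is exactly what exponentiation accounts for. Multiplying by $-\kT$ and using \eqref{eq:Fm_logM} yields \eqref{eq:Fm_linked}, with the density factors $N\pip$ absorbed into the stated normalisation of the $\pip$.

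\textbf{Step 3: the main obstacle.} The hard part is the canonical constraint. Fixed $N$ means the exponential formula is not exact: the corrections $\binom{N}{n}/(N^n/n!)=1-O(n^2/N)$ and the overlapping-cell terms generate non-linked contributions. I would handle this by Poissonising. First pass to the grand-canonical generating function $\sum_N z^N Z_1^{(0)\,-N}\Zm/N!$, where the exponential formula holds exactly, since each $n_{i,\alpha}$ becomes an independent Poisson variable. Then return to fixed $N$ by a saddle point in $z$, showing that the difference is $O(\ln N)$ and hence invisible in $\Fm/N$ in the thermodynamic limit.

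Convergence of the connected sum would be assumed via stability and temperedness, which give a Penrose-type tree bound for small $\beta\lambda\sup|\bar v|$ and finite cell density. This bound is the step I expect to need the most care, and I would cite it rather than reprove it.
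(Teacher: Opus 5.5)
You have a genuine gap: your Step~3 would fail, and no version of it can reach \eqref{eq:Fm_linked} at fixed $N$.

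Your Steps 1--2 are correct, and they are more careful than the paper's own argument. The paper expands over \emph{cell} pairs, where the edge factor is really $e^{-\beta\lambda n_{i,\alpha}n_{j,\beta}\bar v}-1$ rather than $e^{-\beta\lambda\bar v}-1$. It then sums using ``independence of distinct cells'' under $P^{(0)}_{\rm meso}$, which is false for a multinomial, since $\mathrm{Cov}(n_a,n_b)=-N\pip_a\pip_b$. Your labelled rewriting gives the correct per-particle factor $f_{ab}$ and genuinely i.i.d.\ particles. Your Poissonised identity $\sum_N z^N M_N/N!=\exp\bigl(z+B(z)\bigr)$ is exact as a formal series, where $B(z)=\sum_{n\ge 2}c_n z^n/n!$ and $c_n$ is the total weight of connected labelled graphs on $n$ vertices.

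The problem is the return to fixed $N$. The $O(\ln N)$ equivalence-of-ensembles estimate compares $\ln M_N$ with the Legendre transform at the saddle point $z^*$ of $N!\oint e^{z+B(z)}z^{-N-1}\,\mathrm{d}z/(2\pi i)$. That saddle point satisfies $z^*\bigl(1+B'(z^*)\bigr)=N$, so the estimate does not compare $\ln M_N$ with $B(N)$. Since $B'(N)\approx Nc_2=O(\rho)$ is intensive, $z^*/N=1-B'(N)+\cdots$ stays a finite distance from $1$, and
\[
\ln M_N=B(z^*)+(z^*-N)-N\ln(z^*/N)+O(\ln N)=B(N)-\tfrac{N}{2}B'(N)^2+\cdots .
\]
So the canonical $\ln M_N$ differs from the connected sum with vertex weights $N\pip$ by an extensive amount, starting at second order in the Mayer weights.

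You can check this with your own i.i.d.\ representation at order $\lambda^2$. Set $u_a=\sum_b\pip_b\bar v_{ab}$ and $e=\sum_a\pip_a u_a$. Then exactly $\kappa_2^{\rm meso}=\binom N2\bigl(\sum_{a,b}\pip_a\pip_b\bar v_{ab}^2-e^2\bigr)+N(N-1)(N-2)\bigl(\sum_a\pip_a u_a^2-e^2\bigr)$. Now take the $\lambda^2$ part of your unrestricted connected sum: the edge with $f$ expanded to second order, plus the three-vertex path with $|{\rm Aut}|=2$. It equals $\tfrac{\beta^2\lambda^2}{2}$ times this expression up to $O(1)$ terms, \emph{except} that the term $-N(N-1)(N-2)e^2$ is absent. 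Since $e=O(|\Lam|^{-1})$, the connected formula misses $\Fm$ by $\tfrac{\beta\lambda^2}{2}N^3e^2\approx 2\beta\lambda^2(\kappa_1^{\rm meso})^2/N=O(N)$. This is the familiar mechanism by which fixing $N$ cancels graphs with articulation points; for a homogeneous reference these are exactly the non-$2$-connected graphs.

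What your machinery actually proves is \eqref{eq:Fm_linked} for the Poissonised ensemble with vertex weights $z\pip$, not for the canonical $\Fm$. The paper's sketch reaches the canonical statement only through the false independence claim, so it does not supply the missing step either.

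Two further points need correcting:
\begin{enumerate}[label=(\roman*)]
\item Adjacency forces distinct cells only for complete graphs, not for trees. The coincident endpoints of the three-vertex path contribute $\tfrac{N^3\beta^2\lambda^2}{2}\sum_{a,b}\pip_a(\pip_b)^2\bar v_{ab}^2$ to $\ln M_N$. This is of order $N$ unless the momentum partition is made very fine, so the ``distinct'' restriction is not harmless.
\item The tree-graph bound you plan to cite is not available from the paper's hypotheses. Stability of $\phi$ does not pass to $\Vm$, which has no intra-cell terms. If some $\bar v_{ab}<0$, the configuration $n_a=n_b=N/2$ gives $\Vm=N^2\bar v_{ab}/4$, so $\ln M_N$ grows like $N^2$ and the Poissonised series diverges. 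A Penrose-type bound needs $\bar v_{ab}\ge 0$ for all $a\ne b$.
\end{enumerate}
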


\begin{proof}
  Expand $e^{-\beta\lambda\Vm} = \prod_{(i,\alpha)<(j,\beta)}
  e^{-\beta\lambda n_{i,\alpha}n_{j,\beta}\bar v_{(i,\alpha)(j,\beta)}}$ and write
  $e^{-\beta\lambda n n'\bar v} = 1 + (e^{-\beta\lambda n n'\bar v}-1)$.  Expanding over
  all cell pairs generates a sum over graphs, where each edge $(i,\alpha)-(j,\beta)$
  carries a factor $(e^{-\beta\lambda\bar v}-1)$.  Summing over occupation numbers
  under $P_{\rm meso}^{(0)}$ (using independence of distinct cells at $\lambda=0$) and
  taking the logarithm exactly retains only connected graphs, via the same combinatorial
  argument as the standard Mayer linked-cluster theorem~\cite{Mayer1940,Ruelle1969}.
\end{proof}

\begin{corollary}[Extensivity of $\Fm$]
  Each connected graph on $n$ vertices contributes a term $O(N)$ to $\Fm(\lambda)$,
  because the sum over $n$-tuples of cells gives one factor of $N$ from translational
  invariance (one cell is fixed) and $n-1$ factors from the remaining sums, which converge
  geometrically for tempered $\bar v$.  Hence $\Fm(\lambda) = O(N)$ at every order,
  confirming the extensivity of the mesoscopic free energy under the conditions
  of~\cite{OsanoExtensivity}.
\end{corollary}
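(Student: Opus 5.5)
The proof reduces the corollary to \cref{thm:meso_linked}: each connected graph contributes an explicit lattice sum, and we bound that sum by $O(N)$. Fix a connected graph $G$ on $n$ vertices. Its contribution to $-\beta\Fm(\lambda)$ is, up to the factor $1/|{\rm Aut}(G)|$, a sum over distinct cell labels $(i_1,\alpha_1),\dots,(i_n,\alpha_n)$ of $\prod_{(a,b)\in G}f_{ab}\prod_k\pip_{i_k,\alpha_k}$, where $f_{ab}:=e^{-\beta\lambda\bar v_{(i_a,\alpha_a)(i_b,\alpha_b)}}-1$. The factors of $N$ in the paper's normalisation of \eqref{eq:Fm_linked} are carried along: each vertex weight $N\pip_{i,\alpha}$ is of order $\rho\,|C_{i,\alpha}|\,(\text{Maxwellian weight})$, which is the occupation density per cell. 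I will work with the weights $N\pip$ and then match powers of $N$ against the prefactor in \eqref{eq:Fm_linked}.

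\textbf{Step 1 (removing momenta).} Since $\bar v$ in \eqref{eq:v_bar} depends only on spatial positions, $f_{ab}$ is independent of $\alpha_a,\alpha_b$. The momentum sums therefore factor out: $\sum_\alpha\pip_{i,\alpha}=:\pip_i$, the spatial marginal. In the reference state $\pip_i\approx\ell^d/|\Lam|$ (uniform for a translation-invariant $H_0$), so $N\pip_i\approx\rho\ell^d$.

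\textbf{Step 2 (spanning tree bound).} Choose a spanning tree $T\subset G$ and bound $|f_{ab}|\le C_\beta$ on the edges of $G\setminus T$. This uses stability of $\bar v$: $\bar v$ is bounded below, so $e^{-\beta\lambda\bar v}-1\le e^{\beta B}$, and it is trivially $\ge -1$. On the edges of $T$, keep $|f_{ab}|$; by temperedness, $|f_{ab}|\le C|\bar v_{ab}|\lesssim |x_a-x_b|^{-(d+\epsilon)}$ at large separation. Root $T$ at vertex $1$ and sum over the leaves inward. Each non-root vertex contributes a factor
\[
\sum_j N\pip_j\,|f(x_i-x_j)|\lesssim\rho\sum_j\ell^d|f(x_i-x_j)|\to\rho\int|f|\dd\bm r<\infty,
\]
which is bounded uniformly in $i$ and in $|\Lam|$; this is a Riemann sum of an integrable function. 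The remaining root sum $\sum_{i_1}N\pip_{i_1}=N$ supplies the single extensive factor.

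\textbf{Step 3 (assembling the bound).} Combining Steps 1 and 2, each connected graph contributes at most $N\,(\rho\|f\|_{1})^{n-1}C_\beta^{|E(G)|-(n-1)}/|{\rm Aut}(G)|$. This is $O(N)$ at fixed order, so $\Fm(\lambda)-\Fm^{(0)}=O(N)$ at every order, while $\Fm^{(0)}=-N\kT\ln Z_1^{(0)}$ is $O(N)$ by \cref{cor:Fm0}.

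The main obstacle is the bookkeeping of $N$-powers. As written, \eqref{eq:Fm_linked} carries weights $\pip$ rather than $N\pip$, and the multinomial (not Poisson) occupation statistics yield factors $N(N-1)\cdots(N-n+1)/N^n$. To handle this, I would first pass to the grand-canonical or Poissonised version, in which the occupations $n_{i,\alpha}$ are genuinely independent. I would then argue that the resulting canonical–grand-canonical discrepancy is $O(\ln N)$, which does not affect extensivity.

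A second issue is that ``geometric convergence'' of the vertex sums requires $\rho\|f\|_1$ small if one wants summability over all $n$. At a fixed order this is not needed; the corollary only asserts $O(N)$ order by order, so no smallness condition on $\rho\|f\|_1$ is required.
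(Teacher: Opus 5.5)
Your Steps 1--3 are the paper's own argument made precise. The paper says only that one free cell sum supplies the factor $N$ and that the remaining $n-1$ sums converge by temperedness; your spanning-tree bound (constant bound on non-tree edges, leaf-inward summation with $\sup_i\sum_j N\pip_j|f_{ij}|$ uniformly bounded) is what makes that a valid per-graph estimate. You are also right that \eqref{eq:Fm_linked} as printed cannot produce an extensive term. With weights $\pip$ the root sum is $1$ and each further vertex costs $O(|\Lam|^{-1})$, so the paper's ``factor of $N$ from translational invariance'' silently assumes density weights $N\pip$.

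The step that fails is your proposed repair: Poissonise, then argue the canonical/grand-canonical discrepancy is $O(\ln N)$. Ensemble equivalence needs the mesoscopic interaction to be stable as an $N$-body interaction, and it is not. $\Hm$ assigns intra-cell pairs zero energy, and its reference part is linear in the occupations. So suppose $\bar v_{AB}<0$ for a single pair of cells, as happens for the attractive tails in the van der Waals, Barker--Henderson and WCA applications. Then the configuration $n_A=n_B=N/2$ has $\Vm=-N^2|\bar v_{AB}|/4$, while its reference probability is at least $e^{-O(N\ln N)}$. Hence $\Fm(\lambda)-\Fm^{(0)}\le-\lambda N^2|\bar v_{AB}|/4+O(N\ln N)$, and the Poissonised partition function diverges at every activity, even in finite volume. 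The pairwise lower bound on $\bar v$ that you use in Step 2 does not prevent this.

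So the $O(N)$ claim can only hold term by term, as a formal statement. The canonical bookkeeping has to be done directly, not through an ensemble that does not exist. For example, set $w(c,c')=\bar v_{cc'}$ for $c\ne c'$ and $w(c,c)=0$, so that $\Vm=\sum_{k<l}w(c_k,c_l)$ with the $c_k$ i.i.d.\ with law $\pip$. Then $\kappa_m^{\rm meso}$ in \eqref{eq:Fm_cumulant} is a sum of joint cumulants of these pair variables over particle labels. These cumulants vanish unless the edges have connected support. A connected support on $v\le m+1$ labels admits at most $N^v$ label choices. Your tree bound, for bounded tempered $\bar v$, gives $O(|\Lam|^{-(v-1)})$ for every moment product in the cumulant. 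Hence $\kappa_m^{\rm meso}=O(N)$ with no Poissonisation; the $f$-bond version works the same way via truncated expectations.

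The same two-cell example shows that the corollary's closing assertion, extensivity of $\Fm$ itself, does not follow from the order-by-order bounds and is false for attractive $\bar v$. You should not try to recover it through the grand-canonical route.
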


\section{Applications to Reference Systems}
\label{sec:reference}

\subsection{Ideal Gas: Exact Mesoscopic Results}
\label{sub:ref_IG}

For the ideal-gas reference ($\phi\equiv 0$), $f_0^{(1)}(x,p) = \rho\,f_{\rm MB}(p)$
(uniform and Maxwell--Boltzmann in momentum), and the cell averages are:
\begin{equation}
  \bar\varepsilon_{i,\alpha}^{\rm IG} = \frac{1}{|C_{i,\alpha}|}\int_{C_{i,\alpha}}\frac{|p|^2}{2m}\dd z
  = \frac{\bar p_\alpha^2}{2m},\quad {\pip_{i,\alpha}}^{\rm IG} = \frac{|V_i|}{|\Lam|}\cdot
  \frac{|\Pi_\alpha|\,e^{-\beta\bar p_\alpha^2/2m}}{Z_{p}},
\end{equation}
where $Z_p = \sum_\alpha|\Pi_\alpha|e^{-\beta\bar p_\alpha^2/2m}$.  Since $f_0^{(1)}$
factorises in $x$ and $p$, so do ${\pip_{i,\alpha}}^{\rm IG}$:
${\pip_{i,\alpha}}^{\rm IG} = \pi_i^{\rm sp}\cdot{\pi_\alpha}^{\rm mom}$, and the Ursell
function $\utwo_0\equiv 0$.

For a pair-potential perturbation $v(r)$:
\begin{equation}\label{eq:IG_first}
  \kappa_1^{\rm meso}\big|_{\rm IG}
  = \frac{N^2\rho^2}{2}\int\!\!\int{\pip_{i,\alpha}}^{\rm IG}{\pip_{j,\beta}}^{\rm IG}
{\pip_{j,\beta}}^{\rm IG}    v(|x_i-x_j|)\dd x_i\dd x_j
  \xrightarrow{\ell\to 0}
  \frac{N\rho}{2}\int v(r)\,4\pi r^2\dd r,
\end{equation}
recovering the ideal-gas first-order correction with $g_0(r)\equiv 1$.  The
second-order term~\eqref{eq:Fm2_Sk} with $S_0(k)\equiv 1$ gives
$F^{(2)}_{\rm IG} = -(\beta N\rho/2)\int v(r)^2\,4\pi r^2\dd r$, the high-temperature
expansion coefficient~\cite{Zwanzig1954}.

\subsection{Hard-Sphere Reference and Barker--Henderson Theory}
\label{sub:ref_HS}

For the hard-sphere reference of diameter $\sigma$, cells $C_{i,\alpha}$ with
$|x_i-x_j|<\sigma$ have $\bar v_{(i,\alpha)(j,\beta)} = 0$ (the perturbation
$v_{\rm attr}$ vanishes for $r<\sigma$ by construction).  For cells with
$|x_i-x_j|\ge\sigma$:
\begin{equation}
  \bar v_{(i,\alpha)(j,\beta)}^{\rm BH}
  = \frac{1}{|C_{i,\alpha}||C_{j,\beta}|}
    \int_{C_{i,\alpha}}\int_{C_{j,\beta}} v_{\rm attr}(|q_1-q_2|)\dd z_1\dd z_2.
\end{equation}
The reference mesoscopic probabilities are
${\pip_{i,\alpha}}^{\rm HS} = |C_{i,\alpha}|e^{-\beta\bar{\varepsilon_{i,\alpha}}^{\rm HS}}/{Z_1}^{\rm HS}$,
where the hard-sphere cell energy $\bar\varepsilon^{\rm HS}$ accounts for the excluded
volume at the level of the effective Carnahan--Starling equation of state~\cite{CarnahanStarling1969}.

$\Fm^{(0+1)}\to F_{\rm HS} + (N\rho/2)\int v_{\rm attr}(r)\,g_{\rm HS}(r)\,4\pi r^2\dd r$,
where the HS pair distribution function $g_{\rm HS}(r) = \lim_{\ell\to 0}{\pip_{i,\alpha}}^{\rm HS}{\pip_{j,\beta}}^{\rm HS}/\rho^2$ emerges naturally from the
mesoscopic probability product.

\subsection{WCA Reference and Optimal Cell Partition}
\label{sub:ref_WCA}

The Weeks--Chandler--Andersen (WCA) decomposition~\cite{WCA1971} cuts the Lennard-Jones
potential at its minimum $r_{\rm min} = 2^{1/6}\sigma$.  In the mesoscopic framework,
the WCA choice is \emph{optimal} in the following sense: it minimises the mesoscopic
second cumulant $\kappa_2^{\rm meso}$ at fixed cell partition, thereby minimising the
departure of the mesoscopic free energy from its Gibbs--Bogoliubov bound.

\begin{proposition}[WCA as mesoscopic variance minimiser]
\label{prop:WCA_optimal}
  Among all splits $\phi = \phi_0 + v_1$ with $\phi_0$ repulsive, the WCA split
  $(\phi_0^{\rm WCA}, v_1^{\rm WCA})$ minimises
  $\kappa_2^{\rm meso}[v_1] = \avgm{(\Vm^{v_1} - \avgm{\Vm^{v_1}})^2}$
\end{proposition}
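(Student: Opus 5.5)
The plan is to make ``among all splits'' precise, then reduce the optimisation to a pointwise comparison of perturbations, and finally show that $\kappa_2^{\rm meso}$ is monotone under that comparison. Throughout, the reference distribution $P_{\rm meso}^{(0)}$ is held fixed. By \cref{cor:Fm0}, the $\pip_{i,\alpha}$ depend only on the single-particle energies $\bar\varepsilon^{(0)}_{i,\alpha}$; intra-cell repulsion is absorbed there, as in the remark after \cref{def:Hm}. Hence $\kappa_2^{\rm meso}[v_1]$ is a quadratic functional of the cell averages $\bar v_{(i,\alpha)(j,\beta)}$, which are themselves linear in $v_1$ by \eqref{eq:v_bar}.

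\textbf{Admissible class.} I take a split to be admissible when $\phi_0\ge 0$ is nonincreasing, vanishes at infinity, and the remainder $v_1=\phi-\phi_0$ is nondecreasing, i.e.\ a purely attractive tail. The last condition is the implicit meaning of ``$\phi_0$ repulsive'' in perturbation theory. Without some such restriction the statement is false, since one could take $v_1\equiv 0$.

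\textbf{Pointwise domination.} For the Lennard-Jones $\phi$ with minimum $-\epsilon$ at $r_{\min}$, I show that every admissible $v_1$ satisfies $v_1(r)\le v_1^{\rm WCA}(r)\le 0$ for all $r$.
\begin{itemize}
\item For $r\ge r_{\min}$: $\phi_0\ge0$ gives $v_1\le\phi=v_1^{\rm WCA}$.
\item For $r<r_{\min}$: monotonicity gives $v_1(r)\le v_1(r_{\min})\le\phi(r_{\min})=-\epsilon=v_1^{\rm WCA}(r)$.
\end{itemize}
Averaging over cells via \eqref{eq:v_bar} transfers this to $\bar v\le\bar v^{\rm WCA}\le0$ entrywise, so $|\bar v_{(i,\alpha)(j,\beta)}|\ge|\bar v^{\rm WCA}_{(i,\alpha)(j,\beta)}|$ with a common sign.

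\textbf{Monotonicity of $\kappa_2^{\rm meso}$.} I would then use \cref{prop:kappa2_meso}.
\begin{itemize}
\item The diagonal term \eqref{eq:kappa2_meso_diag} has nonnegative coefficients multiplying $\bar v^2$, so it is monotone in $|\bar v|$.
\item The three-cell term \eqref{eq:kappa2_meso_trip} has nonnegative coefficients multiplying $\bar v_{(i,\alpha)(j,\beta)}\bar v_{(i,\alpha)(k,\gamma)}$. Because all $\bar v$ share the sign $\le 0$, each such product is nonnegative and increases under entrywise growth of $|\bar v|$.
\end{itemize}
Thus both leading terms are minimised by WCA, with strict inequality unless $v_1=v_1^{\rm WCA}$ on a set of positive cell measure.

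\textbf{Main obstacle.} The hard step is the subtracted term \eqref{eq:kappa2_meso_sub}, and more generally the fact that exact multinomial covariances $\mathrm{Cov}(n_an_b,n_cn_d)$ for disjoint pairs are negative. This means $\kappa_2^{\rm meso}$ is not literally a positive combination of products of $\bar v$'s. I would first try to show that this term is suppressed by the stated $O(N^{-2})$ factor relative to the $N^3$ three-cell term, and that it is itself monotone in $|\bar v|$, so that domination still holds at leading order in $N$. If exact monotonicity fails, I would state the result as minimisation of the extensive, leading-order part of $\kappa_2^{\rm meso}$. That is the quantity entering $\Fm^{(2)}$ per particle in the thermodynamic limit.
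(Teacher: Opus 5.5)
\textbf{Your pointwise domination is correct, but the monotonicity step is a genuine gap, and the repair you sketch does not work.} The repair relies on the $O(N^{-2})$ in \cref{prop:kappa2_meso}, and that order is wrong.

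Realise $P_{\rm meso}^{(0)}$ by i.i.d.\ cell labels $c_1,\dots,c_N$ with law $\pip$. Writing $a,b$ for cells, put $h(a,b)=\bar v_{ab}$ for $a\ne b$, $h(a,a)=0$, and $h_1(a)=\sum_b\pip_b\,h(a,b)$. Then $\Vm=\sum_{k<l}h(c_k,c_l)$, disjoint particle pairs are independent, and
\begin{equation*}
  \kappa_2^{\rm meso}=\frac{N(N-1)}{2}\,\mathrm{Var}\,h(c_1,c_2)+N(N-1)(N-2)\,\mathrm{Var}\,h_1(c_1).
\end{equation*}
After the disjoint-pair terms cancel, what is subtracted is $(\kappa_1^{\rm meso})^2\cdot 2(2N-3)/(N(N-1))$, a factor $O(N^{-1})$. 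It is therefore of the same extensive order $N\rho^2(\int v_1)^2$ as the three-cell term, which correctly reads $N(N-1)(N-2)\,\mathbb{E}\,h_1^2$. The two cancel down to $N(N-1)(N-2)\,\mathrm{Var}\,h_1$.

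So no extensive part of $\kappa_2^{\rm meso}$ is a positive combination of products of $\bar v$'s. Only $\frac{N(N-1)}{2}\mathbb{E}\,h^2$ is monotone in $|\bar v|$. The term $\mathrm{Var}\,h_1$ is the variance over cells of the mean-field potential, and a variance is not monotone under entrywise domination.

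This is not a technicality. Suppose $u_0$ modulates the reference density as $\rho(1+\delta\cos k_0\cdot x)$ with $k_0\ell\ll1$.
\begin{itemize}
\item $\mathrm{Var}\,h_1$ is then driven by $\delta^2|\hat v_1(k_0)|^2$, and $\hat v_1^{\rm WCA}(k_0)\approx\int v_1^{\rm WCA}<0$.
\item The choice $\psi=c\,\Ind{r<R}$ with $k_0R\in(4.49,7.73)$ has $\hat\psi(k_0)<0$.
\item For small $c>0$ the admissible split $v_1=v_1^{\rm WCA}-\psi$ therefore decreases no $|\bar v|$, yet it lowers $\mathrm{Var}\,h_1$.
\item For $\delta=O(1)$ and $R\gg\ell$ (with $\rho\ell^3\gtrsim1$), a first-order estimate indicates that $\kappa_2^{\rm meso}$ itself drops below its WCA value.
\end{itemize}

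In general, your class is exactly $\{v_1^{\rm WCA}-\psi:\ \psi\ge0\ \text{nonincreasing},\ \psi(\infty)=0\}$. What must be shown is therefore $\kappa_2^{\rm meso}[\psi]\ge 2\,\mathrm{Cov}(\Vm^{\rm WCA},\Vm^{\psi})$, a correlation inequality for which your proposal has no mechanism. It does hold at extensive order for a homogeneous, periodic reference. There $h_1$ varies only through the self-exclusion term $-\pip_{i,\alpha}\bar v_{ii}$, where $\bar v_{ii}$ is the average of $v_1$ over $V_i\times V_i$, so it is monotone. But that conclusion comes from the cancellation above, not from any suppression of the subtracted term.

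\textbf{The admissible class.} Your hypothesis that $v_1$ is nondecreasing is not implied by ``$\phi_0$ repulsive''. Also, $v_1\equiv0$ is not the relevant counterexample, since $\phi_0=\phi$ is not repulsive. The relevant one is Barker--Henderson: $\phi_0=\phi\,\Ind{r<\sigma}$ is nonnegative and nonincreasing, and $v_1^{\rm WCA}\le v_1^{\rm BH}\le0$ pointwise. By your own domination principle, BH would then beat WCA. Your extra hypothesis is doing essential work and must be stated as an assumption.

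\textbf{Comparison with the paper.} The paper's sketch takes a different route: it varies a single cut point $r^*$ and appeals to the $L^2$ size of $v_1$. It offers no substitute for the missing correlation inequality. Indeed, consider the cut-point splits $\phi_0=(\phi-\phi(r^*))\Ind{r<r^*}$, which have $\phi_0\ge0$ exactly when $r^*\le r_{\min}$. Over these, the $L^2(r^2\dd r)$ norm of $v_1$ decreases on $(0,\sigma)$ and increases on $(\sigma,r_{\min})$. It therefore selects $r^*=\sigma$, which is BH again.
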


\begin{proof}[Sketch]
  $\kappa_2^{\rm meso}$ is a quadratic functional of $\bar v_{(i,\alpha)(j,\beta)}$
  with a positive-definite kernel (it is a sum of squared terms weighted by
  $\pip_{i,\alpha}\pip_{j,\beta}>0$).  Minimising over the choice of split point $r^*$
  shows that the optimal $r^*$ is the point where $v(r)$ changes sign (the minimum of
  $\phi$), since this is where the perturbation $v_1 = \phi - \phi_0$ has the smallest
  $L^2$ norm with respect to the reference distribution.  For the LJ potential, this
  minimum is at $r_{\rm min} = 2^{1/6}\sigma$, which is exactly the WCA choice.
\end{proof}

\section{Thermodynamic Properties}
\label{sec:thermo}

All thermodynamic properties follow from the mesoscopic free energy
$\Fm(1) = \Fm^{(0)} + \kappa_1^{\rm meso} - (\beta/2)\kappa_2^{\rm meso} + \cdots$.

\subsection{Pressure}
\label{sub:pressure}

The mesoscopic pressure is $P_{\rm meso} = -(\partial\Fm/\partial V)_{T,N}$.  The
volume dependence enters through both $\Fm^{(0)}$ (via $Z_1^{(0)}$, which depends on
the cell sizes $|V_i| = V/M$ where $M$ is the number of spatial cells) and the
cell-averaged interactions $\bar v_{(i,\alpha)(j,\beta)}$:
\begin{equation}\label{eq:Pm}
  P_{\rm meso}
  = P_0 - \frac{N(N-1)\rho^2}{2}
    \frac{\partial}{\partial\rho}
    \!\left[\sum_{(i,\alpha)\ne(j,\beta)}
    \pip_{i,\alpha}\pip_{j,\beta}\,\bar v_{(i,\alpha)(j,\beta)}\right]_{\pip\text{ fixed}}
    + O(\bar v^2).
\end{equation}
In the continuous limit~\eqref{eq:Fm1_limit}, this recovers the standard first-order
pressure correction:
$P\to P_0 - (\rho^2/2)\int v(r)\,g_0(r)\,4\pi r^2\dd r$.

\subsection{Chemical Potential}
\label{sub:mu}

Differentiating $\Fm(1) = -N\kT\ln Z_1^{(0)} + N\kappa_1^{\rm meso}/N + \cdots$ with
respect to $N$ at fixed $V,T$:
\begin{equation}\label{eq:mu_meso}
  \mu_{\rm meso}
  = -\kT\ln Z_1^{(0)} - \kT
  + \rho\sum_{(i,\alpha)\ne(j,\beta)}
    \pip_{i,\alpha}\,\bar v_{(i,\alpha)(j,\beta)}
    + O(\bar v^2).
\end{equation}

\subsection{The Mesoscopic Euler Relation}
\label{sub:meso_euler}

The mesoscopic Euler relation follows from the extensivity of $\Fm$ established
in \cref{sec:linked}.

\begin{corollary}[Mesoscopic Euler relation]\label{cor:meso_euler}
  Under stability and temperedness~\cite{OsanoExtensivity}, $\Fm(1)$ is homogeneous of
  degree one in $(N,V)$ in the thermodynamic limit, so the mesoscopic Euler relation holds:
  \begin{equation}\label{eq:meso_euler}
    \Fm = T S_{\mathrm{CG}} - P_{\rm meso} V + \mu_{\rm meso} N,
  \end{equation}
  where $S_{\mathrm{CG}}$ is the coarse-grained entropy of \cref{def:CG_entropy}
  from~\cite{OsanoExtensivity}.  In the thermodynamic limit with $\ell/\xi\to\infty$,
  the connection formula~\eqref{eq:connection} gives $\Fm\to F$ and $S_{\rm CG}\to S$,
  so \eqref{eq:meso_euler} converges to the standard Euler relation $F = TS - PV + \mu N$.
\end{corollary}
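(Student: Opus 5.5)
The plan is to deduce the Euler relation from degree-one homogeneity of $\Fm(1)$, which rests on the linked-cluster expansion of \cref{thm:meso_linked} and its corollary, and then pass to the full free energy using the connection formula of \cref{thm:connection}.

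\textbf{Step 1: homogeneity.} By the corollary to \cref{thm:meso_linked}, each connected graph contributes $O(N)$. In the thermodynamic limit at fixed $\rho=N/V$ and fixed cell geometry ($\ell$ fixed, so the number of spatial cells $M=V/\ell^d$ scales with $V$), each graph contribution should therefore take the form $N\,g_G(\rho,T)+o(N)$. Summing over graphs gives $\Fm(1)=N\,f_{\rm meso}(\rho,T)+o(N)$. I will treat $\Fm^{(0)}=-N\kT\ln Z_1^{(0)}$ separately: since $Z_1^{(0)}$ scales with $V$, I combine it with the $\ln N!$ indistinguishability factor to obtain $N\,f_0(\rho,T)$. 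Hence $\Fm(\mu N,\mu V)=\mu\,\Fm(N,V)$ to leading order for every $\mu>0$.

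\textbf{Step 2: Euler's theorem.} Differentiating the homogeneity relation at $\mu=1$ gives
\begin{equation*}
\Fm = N\,\partial_N\Fm + V\,\partial_V\Fm = \mu_{\rm meso}N - P_{\rm meso}V,
\end{equation*}
with $P_{\rm meso}$ and $\mu_{\rm meso}$ defined as in \eqref{eq:Pm} and \eqref{eq:mu_meso}. To reach the stated form I will write the internal-energy side as $\Fm = U_{\rm meso} - T S$, where $S=-\partial_T\Fm$. The mesoscopic entropy must then be identified with $S_{\mathrm{CG}}$. I plan to do this by computing $-\partial_T\Fm^{(0)}$ from \cref{cor:Fm0}: the Gibbs entropy of the multinomial distribution \eqref{eq:ref_multinomial} reduces per particle to $-\kb\sum\pip\ln\pip$ plus phase-space-volume terms, and in the equilibrium identification these terms are absorbed into the energy term. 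Matching $\pip_{i,\alpha}$ with $\pi_{i,\alpha}(1)$ via \cref{def:pi} then gives \eqref{eq:meso_euler}.

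\textbf{Step 3: limit to the full theory.} The connection formula \eqref{eq:connection} bounds $F-\Fm$ by $O(|\Lam|\ell^{-d}e^{-\ell/\xi})$. This bound is extensive but is suppressed by the factor $e^{-\ell/\xi}$, so per particle it vanishes as $\ell/\xi\to\infty$. The extensivity theorem of \cite{OsanoExtensivity} gives the corresponding statement $S_{\rm CG}\to S$. Derivatives with respect to $V$ and $N$ pass to the limit by convexity of the free energy density, since pointwise convergence of convex functions implies convergence of derivatives wherever the limit is differentiable. This yields $F=TS-PV+\mu N$.

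\textbf{Main obstacle.} The hardest step is the identification in Step 2 of the thermodynamic entropy $-\partial_T\Fm$ with $S_{\mathrm{CG}}$ built from $\pi_{i,\alpha}(1)$. The two differ by cell-volume terms ($\ln|C_{i,\alpha}|$) and by interaction corrections. My first approach would be to show that the $\ln|C_{i,\alpha}|$ terms combine with $\beta\bar\varepsilon$ into the Boltzmann form \eqref{eq:pip_Z1}, and then argue that the interaction corrections are of the same exponentially small mutual-information order already controlled in \cref{thm:connection}. If that fails, I would state the relation with $S:=-\partial_T\Fm$ and recover the identification $S=S_{\mathrm{CG}}$ only in the limit.
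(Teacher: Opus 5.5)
Your argument fails at the sentence beginning ``To reach the stated form''. That is a different place from the entropy identification you flag as the main obstacle.

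The first line of your Step 2 is already the complete Euler relation for a Helmholtz free energy. Degree-one homogeneity in $(V,N)$ at fixed $T$ gives exactly $\Fm=-P_{\rm meso}V+\mu_{\rm meso}N$. Subtracting this from the target leaves $TS_{\mathrm{CG}}=0$. In the limit, where you want $S_{\mathrm{CG}}\to S$ with $S$ of order $N$, it leaves $TS=0$.

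No choice of entropy can bridge this. Writing $\Fm=U_{\rm meso}-TS$ with $S=-\partial_T\Fm$ only turns your identity into $U_{\rm meso}=TS-P_{\rm meso}V+\mu_{\rm meso}N$. Your fallback, using $S:=-\partial_T\Fm$ in place of $S_{\mathrm{CG}}$, runs into the same contradiction.

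The ``standard Euler relation $F=TS-PV+\mu N$'' in the statement is the internal-energy relation $U=TS-PV+\mu N$ with $F$ written on the left. Already for the ideal gas, $F=-PV+\mu N$ while $TS\neq 0$. The paper supports the corollary only with the outline you also use (homogeneity from the linked-cluster corollary, then the connection formula), so it does not supply the missing step either. What this line of argument can actually prove is $\Fm=-P_{\rm meso}V+\mu_{\rm meso}N$, equivalently the $U_{\rm meso}$ form above, with limit $F=-PV+\mu N$.

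Even the corrected relation cannot be written in terms of $S_{\mathrm{CG}}$. Because $\sum_{i,\alpha}\pi_{i,\alpha}=1$, $S_{\mathrm{CG}}$ is the Shannon entropy of a single-particle distribution, of order $\kb\ln(|\Lam|/\ell^d)$. By contrast, $-\partial_T\Fm$ is at least of order $N$. So $TS_{\mathrm{CG}}$ is invisible at the order at which the Euler relation operates, and $S_{\mathrm{CG}}\to S$ cannot hold.

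At $\lambda=0$ one has exactly $-\partial_T\Fm^{(0)}=N S_{\mathrm{CG}}^{(0)}+N\kb\sum_{i,\alpha}\pip_{i,\alpha}\ln|C_{i,\alpha}|$. The last term is a genuine entropy: it depends on $\ell$ and on the phase-space units, and it grows with $\ell$. It multiplies $T$ and cannot be ``absorbed into the energy term''.

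Finally, $\Zm$ as defined carries no $1/N!$. Hence $\Fm^{(0)}=-N\kT\ln Z_1^{(0)}$ with $Z_1^{(0)}\propto V$ is not homogeneous of degree one. Inserting $\kT\ln N!$ repairs this, but it changes $\Fm$ and shifts $\mu_{\rm meso}$ by roughly $\kT\ln N$. You would then be proving a statement about a different free energy from the one the corollary names.
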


\section{Convergence of the Mesoscopic Perturbation Series}
\label{sec:convergence}

\subsection{Convergence of the Mesoscopic Cumulant Series}
\label{sub:meso_conv}

\begin{proposition}[Convergence radius of the mesoscopic series]\label{prop:meso_conv}
  The mesoscopic cumulant series \eqref{eq:Fm_cumulant} converges absolutely for
  $|\lambda| < R_{\rm meso}$ where
  \begin{equation}\label{eq:Rm}
    R_{\rm meso}
    = \left(\beta\max_{(i,\alpha)\ne(j,\beta)}|\bar v_{(i,\alpha)(j,\beta)}|\right)^{-1}.
  \end{equation}
  For $\ell\gg\xi$, the cell-averaged interaction satisfies
  $\max|\bar v_{(i,\alpha)(j,\beta)}| \le \max_{r>\ell}|v(r)| < \max_r|v(r)|$,
  so $R_{\rm meso} > R_{\rm full} = (\beta\,{\rm ess\,sup}|v|)^{-1}$:
  \emph{coarse-graining enlarges the radius of convergence}.
\end{proposition}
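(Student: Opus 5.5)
The plan is to treat the mesoscopic cumulant generating function $K_{\rm meso}(\lambda)=\ln M_{\rm meso}(\lambda)$ as an analytic function of a complex coupling $\lambda$. I would locate its nearest singularity through the mesoscopic linked-cluster representation of \cref{thm:meso_linked}, and only then compare with the continuum radius.

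The first step is analyticity of $M_{\rm meso}$. By \eqref{eq:Mm}, $M_{\rm meso}(\lambda)$ is a finite sum over $\NN$ of terms $P_{\rm meso}^{(0)}(\{n\})e^{-\beta\lambda\Vm(\{n\})}$. It is therefore entire in $\lambda$, and $K_{\rm meso}$ is analytic on any disc on which $M_{\rm meso}$ has no zeros. The Taylor coefficients of $K_{\rm meso}$ at $0$ are exactly $(-\beta)^n\kappa_n^{\rm meso}/n!$. Absolute convergence of \eqref{eq:Fm_cumulant} for $|\lambda|<R$ thus reduces to showing that $M_{\rm meso}$ is zero-free on $\{|\lambda|<R\}$, or more directly to bounding $K_{\rm meso}$ there.

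A crude bound $|\beta\lambda\Vm|\le\tfrac12\beta|\lambda|N^2\max|\bar v|$ gives a zero-free disc, but only of radius $O(N^{-2})$. To obtain the $N$-independent radius \eqref{eq:Rm}, I would instead work with the connected-graph series \eqref{eq:Fm_linked}. Each edge carries the Mayer factor $e^{-\beta\lambda\bar v}-1$, and for $|\lambda|<R_{\rm meso}$ this satisfies
\[
\bigl|e^{-\beta\lambda\bar v_{(i,\alpha)(j,\beta)}}-1\bigr|\le e^{\beta|\lambda|\,|\bar v_{(i,\alpha)(j,\beta)}|}-1\le \beta|\lambda|\,|\bar v_{(i,\alpha)(j,\beta)}|\,e .
\]
I would then apply the Penrose tree-graph bound, replacing the sum over connected graphs by a sum over spanning trees. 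Together with the weights $\prod_k\pip_{i_k,\alpha_k}$ and the temperedness of $\bar v$ (as in the corollary on extensivity of $\Fm$), this should give a geometric series in $\beta|\lambda|\max|\bar v|$ per particle. Summed uniformly in $N$, it converges for $|\lambda|<R_{\rm meso}$. Since the cumulants are the Taylor coefficients of the resulting analytic function, absolute convergence of \eqref{eq:Fm_cumulant} follows.

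For the comparison with $R_{\rm full}$, the bound $|\bar v_{(i,\alpha)(j,\beta)}|\le\sup_{q_1\in V_i,\,q_2\in V_j}|v(|q_1-q_2|)|$ is immediate from \eqref{eq:v_bar}, since $\bar v$ is an average. For cells separated by more than $\ell$ this is at most $\max_{r>\ell}|v(r)|$. For adjacent cells, averaging over a region of diameter $\ell\gg\xi$ still strictly reduces the supremum unless $|v|$ is constant on that region. Hence $\max|\bar v|<\operatorname{ess\,sup}|v|$ and $R_{\rm meso}>R_{\rm full}$.

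The main obstacle is the tree-graph step. The edge factors in \eqref{eq:Fm_linked} are really occupation-dependent, namely $e^{-\beta\lambda n n'\bar v}-1$, so controlling them uniformly in $N$ with only $\max|\bar v|$ appearing is delicate. I would first try to rewrite $\Vm$ at the particle level, $\tfrac12\sum_{k\ne l}\bar v_{c(k)c(l)}$ with $c(k)$ the cell of particle $k$, so that bonds are between particles with factor $e^{-\beta\lambda\bar v}-1$, and then apply the standard Ruelle-type convergence criterion. A second difficulty is adjacent cells, where the clean bound $\max_{r>\ell}|v|$ fails; there I would fall back on the strict averaging inequality.
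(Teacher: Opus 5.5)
The reduction to a zero-free disc for $M_{\rm meso}$ is correct. So is your observation that the crude bound only gives an $N$-dependent radius. The gap is the tree-graph step, and it cannot be closed, because the $N$-uniform per-particle bound it is meant to produce is false for this model.

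$\Hm$ has no intra-cell pair term and the occupations are unconstrained, so $\Vm$ is a quadratic form in $\{n_{i,\alpha}\}$ with zero diagonal. Such a form is catastrophic for at least one sign of $\lambda$. Take cells $a\neq b$ with $\bar v_{ab}\neq0$ and the configuration $n_a=n_b=N/2$, all other occupations zero. There $\Vm=\tfrac{N^2}{4}\bar v_{ab}$, and the configuration has reference probability at least $(\pip_a\pip_b)^{N/2}$. Every disc about $0$ contains real $\lambda$ with $\lambda\bar v_{ab}<0$, and for those
\[
\ln M_{\rm meso}(\lambda)\;\ge\;\tfrac{N}{2}\ln\bigl(\pip_a\pip_b\bigr)+\tfrac{1}{4}\,\beta\,|\lambda\,\bar v_{ab}|\,N^2 .
\]
This grows like $N^2$ as soon as $\ln(\pip_a\pip_b)=o(N)$, which holds for cells of fixed size. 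So a bound $|\ln M_{\rm meso}(\lambda)|\le N\,g(\beta|\lambda|\max|\bar v|)$, holding uniformly in $N$ on $|\lambda|<R_{\rm meso}$, is impossible. That is exactly what your spanning-tree sum is supposed to deliver.

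Equivalently, the stability hypothesis $\mathrm{Re}(\lambda)\,\Vm\ge -BN$ needed for Penrose/Ruelle-type bounds fails. Passing to particle-level bonds to handle the occupation-dependent edge factors does not help. The obstruction is precisely that particles can pile into a few cells at no intra-cell cost.

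The simplest case shows that the radius really shrinks. With two cells and $\pip_a=\pip_b=\tfrac12$, one has $\Vm=\bar v_{ab}\,n_a(N-n_a)$. Up to a zero-free factor, $M_{\rm meso}$ is then the Curie--Weiss partition function at coupling $J=\beta\lambda\bar v_{ab}N/2$. Its limiting free energy is non-analytic at $J=1$, which forces zeros into $|J|<1+\epsilon$ for large $N$. Hence the radius of the cumulant series is at most $(2+o(1))/(\beta N|\bar v_{ab}|)$, and the proposition as stated fails in this instance.

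Even if stability were restored, say by an intra-cell repulsion in $\Hm$ or an occupation cap, the tree-graph route would not give \eqref{eq:Rm}. Each vertex attached to a spanning tree brings a sum over another particle's cell. That is a factor $\sup_a\sum_{b\neq a}N\pip_b\,|e^{-\beta\lambda\bar v_{ab}}-1|$, comparable to $\beta|\lambda|\sup_a\sum_{b\neq a}N\pip_b|\bar v_{ab}|$. For a uniform reference this contains the mean occupation $\rho\ell^d\gg1$ of each spatial cell within range. It therefore exceeds $\beta|\lambda|\max|\bar v|$ by roughly that factor, consistent with the two-cell case where $\rho\ell^d=N/2$. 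Temperedness makes the sum finite; it does not reduce it to $\max|\bar v|$.

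For comparison, the paper reaches \eqref{eq:Rm} from the crude estimate $|\Vm|\le N^2\max|\bar v|/2$. It then removes a factor of $N$ by ``normalising per particle''. Dividing $K_{\rm meso}$ by $N$ does not change its radius of convergence, so that step does not hold either. Any provable version of the proposition needs extra hypotheses and a different radius.

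Two smaller points.
\begin{enumerate}
\item The bound $|\bar v|\le\max_{r>\ell}|v(r)|$ fails not only for adjacent cells but also for pairs $(i,\alpha)\neq(i,\beta)$ in the same spatial cell. There $\bar v$ is the self-average of $v$ over $V_i\times V_i$. Your averaging fallback handles these and gives strict inequality unless $v$ is a.e.\ a constant of modulus ${\rm ess\,sup}|v|$ on the separations realised by the pair of cells. That condition has nothing to do with $\ell\gg\xi$. Like the paper's version, the argument only compares two numbers; it does not show that $R_{\rm full}$ is the radius of the full series.
\item $\NN$ is infinite, because finite-volume momentum cells must tile $\R^3$. $M_{\rm meso}$ is still entire, since $\Vm$ is bounded on $\NN$.
\end{enumerate}
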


\begin{proof}
  The occupation numbers satisfy $|n_{i,\alpha}|\le N$, so
  $|\Vm|\le N^2\max|\bar v|/2$.  The cumulant bound
  $|\kappa_n^{\rm meso}|\le n!\,(N^2\max|\bar v|/2)^{n-1}$ then gives a geometric
  series with ratio $\beta\lambda N^2\max|\bar v|/(2N) = \beta\lambda N\max|\bar v|/2$,
  which converges for $\lambda < 2/(\beta N\max|\bar v|)$.  Normalising per particle
  (each $\kappa_n^{\rm meso} = O(N)$) removes one factor of $N$, giving
  \eqref{eq:Rm}.  The inequality $\max|\bar v_{ij}| < \max|v|$ follows from
  Jensen's inequality applied to $|\bar v| = |\text{cell average of }v| \le
  \text{cell average of }|v| \le \max|v|$, with strict inequality for cells of
  positive measure.
\end{proof}

\subsection{Practical Validity Criterion}
\label{sub:meso_validity}

The mesoscopic perturbation series is reliable when:
\begin{enumerate}[label=(\roman*)]
  \item $\kappa_2^{\rm meso}/(\kappa_1^{\rm meso})^2 \ll 1$: the relative fluctuation
    of $\Vm$ is small.  This holds whenever $\bar v_{(i,\alpha)(j,\beta)}$ varies little
    across cell pairs, i.e., the interaction is smooth at the coarse-graining scale.
  \item $\ell\gg\xi$: the scale separation~\eqref{eq:scale_sep} ensures that
    the connection formula~\eqref{eq:connection} has exponentially small corrections,
    so $\Fm(\lambda)\approx F(\lambda)$.
  \item $I(i,j;\lambda)\approx 0$: inter-cell mutual information is negligible, which
    is guaranteed by the extensivity conditions of~\cite{OsanoExtensivity}.
\end{enumerate}
When condition~(iii) fails (long-range interactions), the mutual-information
corrections in~\eqref{eq:connection} must be included, and the mesoscopic cumulant series
provides a systematic expansion around the factorised (independent-cell) approximation.

\section{Discussion}
\label{sec:discussion}

\paragraph{The mesoscopic partition function as a bridge.}
The mesoscopic partition function $\Zm(\lambda)$ defined in \cref{def:Zm} is the exact
bridge between the coarse-graining framework of~\cite{OsanoExtensivity} and standard
perturbation theory.  Its reference part~\eqref{eq:Zm0_fact} factorises into
$(Z_1^{(0)})^N$ because the combined coarse-graining produces statistically
independent cells---this is precisely the factorisation condition that
\cite{OsanoExtensivity} identifies with extensivity.  The perturbed part introduces
inter-cell interactions that break factorisation, and the degree of this breaking is
measured by the mutual informations $I(i,j;\lambda)$, which are the same objects that
control entropy additivity in~\cite{OsanoExtensivity}.

\paragraph{Extensivity as a precondition for perturbation theory.}
The connection formula~\eqref{eq:connection} shows that perturbation theory at the
mesoscopic level is trustworthy precisely when the system is extensive:
$F(\lambda)\approx\Fm(\lambda)$ iff $I(i,j;\lambda)\approx 0$ iff
$S_{\rm CG} \approx \sum_i S_i$ (extensivity).  Thus the conditions under which the
standard perturbation theory is valid---stability, temperedness, short-range
interactions---are exactly those that guarantee extensivity in the sense
of~\cite{OsanoExtensivity}.  The two theories are not merely analogous; they are
logically equivalent characterisations of the same microscopic condition.

\paragraph{Non-extensive systems.}
For gravitational or other long-range systems, $I(i,j;\lambda)\not\to 0$ and the
mesoscopic free energy $\Fm(\lambda)$ overestimates $F(\lambda)$ by the total mutual
information $\kT\sum_{i<j}I(i,j)$.  Equation~\eqref{eq:connection} provides a
systematic correction: one can include the pair mutual information term explicitly,
using the Ursell-function bound $I(i,j)\le C|x_i-x_j|^{-s}$ for potentials decaying
as $r^{-s}$, to obtain a controlled non-extensive free energy perturbation theory.

\paragraph{Cell-size dependence and the continuum limit.}
The mesoscopic theory depends on the choice of cell partition $\{C_{i,\alpha}\}$.
As $\ell\to 0$, \cref{prop:first_limit,prop:second_limit} show that the mesoscopic
cumulants converge to their continuum counterparts and $\Fm\to F$.  For finite $\ell$,
the mesoscopic theory is an approximation whose accuracy improves monotonically as
$\ell\to 0$ (assuming $\ell\gg\xi$ is maintained).  The optimal cell size balances
computational tractability (few cells for large $\ell$) against accuracy (small $\ell$
for convergence to the continuum); \cref{prop:meso_conv} shows that larger cells enlarge
the convergence radius of the cumulant series.

\section{Conclusion}
\label{sec:conclusion}

We have developed a complete perturbation theory for the Helmholtz free energy grounded
in the mesoscopic coarse-graining framework of~\cite{OsanoExtensivity}.  The principal
results are:

\begin{enumerate}[label=(\roman*)]
  \item \textbf{Mesoscopic partition function} (\cref{sec:meso_Z}).
    The combined coarse-graining operator $\CG$ of~\cite{OsanoExtensivity} defines a
    mesoscopic Hamiltonian $\Hm$ (cell-averaged kinetic and reference energies plus
    cell-averaged inter-cell pair interactions) and a phase-space volume weight
    $\mathcal{W}$.  Their combination gives the mesoscopic partition function
    $\Zm(\lambda) = \sum_{\NN}\mathcal{W}\,e^{-\beta\Hm}$, whose reference part
    factorises by the multinomial theorem: $\Zm^{(0)} = (Z_1^{(0)})^N$.

  \item \textbf{Mesoscopic cumulant expansion} (\cref{sec:meso_pert}).
    The mesoscopic free energy satisfies
    $\Fm(\lambda) = \Fm^{(0)} + \lambda\kappa_1^{\rm meso}
    - (\beta\lambda^2/2)\kappa_2^{\rm meso} + \cdots$,
    with the Gibbs--Bogoliubov inequality $\Fm\le\Fm^{(0)}+\lambda\kappa_1^{\rm meso}$
    and the exact coupling-parameter integration
    $\Fm(1)-\Fm^{(0)} = \int_0^1\avg{\Vm}_{\rm meso,\lambda}\dd\lambda$.

  \item \textbf{Connection formula} (\cref{sec:connection}).
    $F(\lambda) = \Fm(\lambda) - \kT\sum_{i<j}I(i,j;\lambda)
    + O(|\Lam|\ell^{-d}e^{-2\ell/\xi})$.
    The corrections are precisely the inter-cell mutual information from the extensivity
    analysis of~\cite{OsanoExtensivity}, and they vanish exponentially for $\ell\gg\xi$.

  \item \textbf{Continuous limit} (\cref{sec:first,sec:second}).
    The mesoscopic first-order theory recovers $F_0 + \avgo{V}$ in the $\ell\to 0$ limit;
    The second-order theory recovers the structure-factor formula.  The WCA decomposition
    is identified as the split that minimises the mesoscopic second cumulant.

  \item \textbf{Unified framework} (\cref{sec:thermo,sec:discussion}).
    The conditions for the validity of perturbation theory (stability, temperedness,
    short-range $v$) are exactly the conditions for extensivity in~\cite{OsanoExtensivity}: both reduce to the vanishing of inter-cell mutual information.  The mesoscopic Euler
    relation~\eqref{eq:meso_euler} connects the perturbation expansion directly to the
    generalised Euler relation of~\cite{OsanoExtensivity}.
\end{enumerate}

\section*{Acknowledgements}
The author thanks the University of Cape Town's NGP for financial support.

\bibliographystyle{unsrtnat}

\end{document}